\documentclass[journal]{IEEEtran}
\IEEEoverridecommandlockouts
\usepackage{makecell}
\usepackage{cite}
\usepackage{amsmath,amssymb,amsfonts}
\usepackage{algorithm}
\usepackage{array}
\usepackage{stfloats}
\usepackage{url}
\usepackage{verbatim}

\usepackage{graphicx}

\usepackage{upgreek}
\usepackage{graphicx,color,overpic,psfrag}
\usepackage{textcomp}
\usepackage[T1]{fontenc}
\usepackage{mathrsfs}
\usepackage{float}
\graphicspath{{Figures/}}
\usepackage[table]{xcolor}
\usepackage[bookmarks=true, colorlinks, linkcolor=black, anchorcolor=black, citecolor=black]{hyperref}
\usepackage{latexsym}
\usepackage{bm}
\usepackage{cases}
\usepackage{fancyhdr}
\usepackage{setspace}
\usepackage{algpseudocode}
\usepackage{blkarray}
\usepackage{booktabs}
\usepackage{multirow}
\usepackage{dsfont}
\usepackage{tabularx}
\usepackage{amsfonts}
\usepackage{tikz}
\usepackage{lipsum}
\graphicspath{{Figures/}}
\newtheorem{theorem}{Theorem}

\newtheorem{remark}{Remark}

	\DeclareMathAlphabet\mathbfcal{OMS}{cmsy}{b}{n}
	\ifCLASSOPTIONcompsoc
	\usepackage[caption=false,font=normalsize,labelfont=sf,textfont=sf]{subfig}
	\else
	\usepackage[caption=false,font=footnotesize]{subfig}
	\fi
	
	\allowdisplaybreaks[4]

	\newcommand{\ra}[1]{\renewcommand{\arraystretch}{#1}} 
	
	

	\def\BibTeX{{\rm B\kern-.05em{\sc i\kern-.025em b}\kern-.08em
			T\kern-.1667em\lower.7ex\hbox{E}\kern-.125emX}}
	
\begin{document}
		\title{Channel Charting With Physical Channel Fingerprints For Massive MIMO-OFDM Channel Acquisition}

		\author{Jinke Tang,~\IEEEmembership{Graduate~Student~Member, IEEE,} Xiqi~Gao,~\IEEEmembership{Fellow, IEEE},  Li You,~\IEEEmembership{Senior~Member, IEEE,}  Xiang-Gen~Xia,~\IEEEmembership{Fellow, IEEE,} and Cheng-Xiang Wang,~\IEEEmembership{Fellow, IEEE}

			
			\thanks{Jinke Tang, Xiqi Gao,  Li You, and Cheng-Xiang Wang are with the National Mobile Communications Research Laboratory, Southeast University, Nanjing 210096, China, and also with the Purple Mountain Laboratories, Nanjing 211111, China  (e-mail: jktang@seu.edu.cn,
				xqgao@seu.edu.cn, lyou@seu.edu.cn, chxwang@seu.edu.cn).}
			\thanks{Xiang-Gen Xia is with the Department of Electrical and Computer Engineering, University of Delaware, Newark, DE 19716, USA (e-mail: xianggen@udel.edu).}
		}

		\maketitle
		\begin{abstract}
			
			The advancement of 6G mobile communication and positioning technologies has amplified the significance of location-aware tools, such as location-indexed channel fingerprints (CFs) and channel charting, which are becoming key enablers for massive MIMO-OFDM systems. In this paper, we propose a novel channel charting with physical CFs (PCFs) and demonstrate its effectiveness in channel state information (CSI) acquisition. First, we define the PCF based on a cluster-based geometric stochastic channel model (GBSM), enabling a comprehensive representation of physical channel characteristics using a compact set of parameters. We then develop a methodology for PCF acquisition in massive MIMO-OFDM systems. By exploiting the relationship between PCFs and the space-frequency-time (SFT) domain channel, the proposed method extracts PCFs from multi-location channel measurements and constructs a structured channel charting with location-indexed PCFs. Furthermore, we propose a low-complexity algorithm to acquire beam domain statistical CSI (sCSI) using the PCFs in the channel charting. The resulting sCSI can be directly employed as prior information for channel estimation. Simulation results show that the proposed method delivers sCSI performance comparable to traditional online probing techniques, and the generated sCSI can serve as reliable prior knowledge to significantly enhance the accuracy of channel estimation. These results validate the proposed PCF as a powerful and versatile tool for channel acquisition and system design of the next-generation mobile communication.

		\end{abstract}
		
		\begin{IEEEkeywords}
			Massive MIMO-OFDM, channel fingerprint, sCSI, cluster-based GBSM, beam domain.
		\end{IEEEkeywords}
		\section{Introduction}\label{section1}

		Research on sixth-generation (6G) mobile communication technology is advancing rapidly and has attracted substantial global attention \cite{you2021towards}. Compared to 5G systems, 6G is expected to fulfill more demanding performance requirements, including significantly higher user terminal (UT) connectivity density, enhanced data rates, reduced latency \cite{you2021towards,9237116,10158439}, and improved positioning accuracy \cite{10286475}. Given its proven effectiveness in 5G networks, massive multi-input multi-output orthogonal frequency division multiplexing (MIMO-OFDM) is expected to remain a crucial technology in 6G. The adoption of massive MIMO arrays and their evolution into ultra-massive MIMO will be pivotal in addressing the rigorous performance demands of 6G \cite{you2021towards}.

		However, the increasing density of UTs and the deployment of large-scale antenna arrays in 6G networks introduce significant challenges for system design, particularly in the acquisition of channel state information (CSI) \cite{mine}. CSI is typically categorized into instantaneous CSI and statistical CSI (sCSI). Unlike instantaneous CSI, sCSI evolves relatively slow over time and can serve as valuable prior knowledge to assist in the estimation of instantaneous CSI \cite{9910031, mine, 10026502}. In scenarios involving a large number of UTs, sCSI is especially critical for guiding pilot scheduling, thereby effectively mitigating pilot interference among UTs \cite{mine}. Additionally, sCSI plays a key role in key tasks such as precoding, UT grouping, and power control \cite{10146318, 9003618,9592715}. Most existing studies obtain sCSI through online probing methods based on pilot signals with various schemes proposed in \cite{8074806,10146318,9003618,9592715, 10993474}. However, as both the number of UTs and the dimensionality of antenna arrays continue to increase, these probing-based methods might be impractical due to their excessive overhead and computational complexity.

		Meanwhile, the evolution of mobile communication systems also presents new opportunities for sCSI acquisition. Massive MIMO-OFDM systems facilitate the collection of high-resolution channel parameters, including path gain, angle, and delay \cite{sage11}. Furthermore, the dense deployment of UT nodes in future systems enables finer spatial sampling granularity \cite{you2021towards}. By treating base stations (BSs) and UTs as distributed sensors, a high-precision channel parameter database can be constructed \cite{9373011, 10430216}. This database can be employed to construct a channel charting \cite{8444621,10495336,11223663,9968109,che2024channel} or a channel knowledge map (CKM) \cite{10430216} to capture environment-specific propagation characteristics across the entire cell. Specifically, given the location of a UT, the corresponding sCSI can be inferred from the database, eliminating the need for online probing and enabling training-free channel acquisition \cite{10430216}.

		In this paper, we refer to this structured, location-indexed
		representation as channel charting with channel fingerprints
		(CFs). Here, “CFs” denote location-dependent channel parameters, while “charting” highlights the spatially organized structure. Existing channel charting techniques \cite{8444621,10495336,11223663,9968109,che2024channel} typically rely on unsupervised learning (UL) to embed high-dimensional channel measurements into a latent space. Although these data-driven embeddings preserve spatial proximity, they do not explicitly capture the underlying physical channel properties. As a result, they are not directly suitable for reconstructing sCSI or for tasks that require physically meaningful parameters. In contrast, our proposed approach extracts CFs directly from the underlying physical channel model, ensuring that each CF has clear physical meaning and faithfully captures the intrinsic location- and environment-dependent characteristics of the wireless channel. Because these CFs are physics-driven rather than data-driven latent embeddings, they provide an intuitive and accurate geometric representation of the spatial channel variations. Moreover, their physically meaningful structure naturally forms a reliable location-indexed database that can be directly used for downstream tasks such as sCSI acquisition.

		The idea of utilizing CFs in channel charting to replace online probing for sCSI acquisition raises a fundamental question: which CFs are suitable for this purpose? This application scenario imposes specific requirements: an ideal CF should be general-purpose, storage-efficient, and adaptable to varying needs. Existing CFs, however, are often designed with different objectives in mind. Some, like angle-based or beam-index CFs for beam alignment \cite{10287775,10791446} or spatial covariance matrices for precoding  \cite{xie}, are application-specific and capture only partial channel characteristics, limiting their suitability for general sCSI reconstruction. Others, such as the spatial-frequency (SF) beam domain channel power distribution matrix, which is a kind of comprehensive sCSI for channel estimation \cite{9910031} and precoding \cite{9592715}, while rich in information, may lead to considerable storage overhead in large-scale deployments \cite{10292876,11247875}. Moreover, these CFs are typically tied to a fixed beam domain configuration, reducing their applicability when different sCSI types or beam domain dimensions are required.    To address these challenges, we propose a novel channel charting with physical channel fingerprints (PCFs) in this paper, which extends existing channel charting and CKM representations in a unified and physics-driven manner. Specifically, the proposed PCF exhibits the following key properties:
		\begin{itemize}
			\item \emph{Generality $\&$ Completeness}: By encoding a complete set for channel properties, the PCF at a given location can be transformed into specific kinds of sCSI to meet various application requirements.
			\item \emph{Storage Efficiency}:  The PCF compresses channel properties into a compact parameter set, drastically reducing storage overhead while preserving reconstruction capability.
			\item \emph{Physics-Driven}: Derived from the propagation environment, the PCF is not tied to any predefined sCSI type or beam domain definition, allowing a single stored PCF to be adaptively projected into a desired sCSI format.
		\end{itemize}

		To distinguish the proposed PCF from these prior channel charting (or CKM) approaches, we provide a comparative summary in Table I. The key innovation of PCF is that, unlike prior CFs or latent coordinates, it represents each location using a physics-driven parameter set that can be analytically transformed into any domain of sCSI. This unified design simultaneously achieves physical interpretability, functional versatility, and storage scalability, making PCF a suitable and generalizable tool for sCSI acquisition.

		\definecolor{lightblue}{rgb}{0.93,0.95,1.0}
		\begin{table*}[t]
			\captionsetup{font=footnotesize}
			\caption{Characteristics and Applicability of Channel Charting Approaches}
			\label{CFcompare}
			\centering
			\scriptsize
			\renewcommand{\arraystretch}{1.5}
			\setlength{\tabcolsep}{2pt} 
			\begin{tabular}{>{\raggedright\arraybackslash}p{3.2cm}
					>{\raggedright\arraybackslash}p{3.5cm}
					>{\raggedright\arraybackslash}p{5cm}
					>{\raggedright\arraybackslash}p{4.4cm}}
				\toprule
				\textbf{Aspect} &
				\textbf{UL-based Channel Charting} &
				\textbf{Channel Charting with Conventional CFs} &
				\textbf{Channel Charting with PCFs (Proposed)} \\
				\midrule
				\rowcolor{lightblue}
				Modeling Basis &
				Data-driven (learned latent space) &
				Physics-driven (some derived from partial physical properties [21], [22])  &
				\textbf{Physics-driven(considering complete physical properties)} \\
				Output Format & 
				Low-dimensional latent embedding &
				Task-specific channel features &
				\textbf{Compact parameter set} \\
				\rowcolor{lightblue}
				Physical Interpretation &
				Not directly available &
				Interpretable within the target task or system design &
				\textbf{Direct (complete set summarizing physical properties)} \\
				Primary Application &
				Similarity-based tasks (e.g., UT grouping [16, 17], PSOP reuse [19, 20]) &
				Dedicated functions (e.g., beam alignment [21, 22], precoding with fixed sCSI [23-25]) &
				\textbf{General tasks (sCSI acquisition, channel estimation, precoding)} \\
				\rowcolor{lightblue}
				Storage Overhead &
				Low (compact embedding) &
				Varies (can be high for full sCSI storage [24]) &
				\textbf{Low (compact parameter representation)} \\
				Adaptability to New Tasks/Configurations &
				Limited (requires retraining) &
				Limited (features tied to original design) &
				\textbf{High (parameters projectable to new beam domains/sCSI types)} \\
				\bottomrule
			\end{tabular}
		\end{table*}

		Due to its physics-driven properties, the generation of PCFs in massive MIMO communications relies on appropriate channel modeling techniques. Commonly adopted models include geometric-based stochastic models (GBSMs) \cite{9786750, 9318511, huang2020geometry, 10971994} and beam-based channel models (BBCMs) \cite{9954203, 9910031, mine, 10026502, 7332961}. Among them, GBSMs explicitly characterize the properties of multipath components (MPCs) in the propagation environment, making them well-suited for analyzing the physical properties of wireless channels. Therefore, we define the PCF within the GBSM framework. In contrast, BBCMs, as a type of statistical channel models, quantize the continuous angular and delay domains into discrete beams, thereby reducing the complexity of channel processing. This property has led to their widespread use in channel acquisition and transmission design \cite{9910031, mine, 10026502, 7332961}. To bridge these two models, we model the same propagation environment using both GBSM and BBCM. Based on GBSM, we define the PCF, and by leveraging the intrinsic relationship between GBSM and BBCM, we further propose a method to transform the PCF into beam domain sCSI for further applications.

		In summary, this paper focuses on the channel charting with PCFs, systematically addressing its formal definition, acquisition methodology, and applications in sCSI acquisition. The contributions are summarized as follows.
		\begin{itemize}
			\item We introduce the definition of PCF based on a cluster-based GBSM. According to the clustering structure of MPCs in the GBSM, the PCF is defined to capture the physical characteristics of the channel comprehensively using a compact set of parameters.
			\item We propose a methodology for acquiring PCFs in massive MIMO-OFDM systems. This includes formulating the SFT domain channel model and establishing its connection to physical channel parameters, estimating the SFT domain channels, and employing a space-alternating generalized expectation-maximization (SAGE)-based algorithm to extract PCFs from the estimated channels. By aggregating these location-specific PCFs, a channel charting with PCFs can be constructed, providing a location-indexed representation of the physical channel environment.
			\item We propose an efficient method for sCSI acquisition using the PCFs stored in the channel charting. We firstly establish the relationship between the cluster-based GBSM and the triple BBCM. Leveraging this relationship, we derive the expression of beam domain sCSI and develop a low-complexity algorithm for its generation from the PCF. The resulting sCSI can be directly utilized as prior information to achieve efficient channel estimation.
		\end{itemize}

		The rest of the paper is organized as follows. In Section~\ref{section2}, we introduce the definition of PCF based on a GBSM. In Section~\ref{section3}, we propose the methodology for PCF acquisition and construct the channel charting with PCFs. In Section~\ref{section4}, the method for sCSI acquisition with PCFs is proposed. Simulation results are shown in Section~\ref{section5}, and the paper is concluded in Section~\ref{section6}.

		\emph{Notations:} In this paper, bold lowercase (uppercase, calligraphic) letters are used to denote column vectors (matrices, tensors). ${\bf{0}}$ denotes the all-zeros vector, matrix or tensor. ${{\bf{I}}_N}$ denotes the $N \times N$ identity matrix. Superscript ${( \cdot )^{\rm T}}$, ${( \cdot )^{\rm H}}$, ${( \cdot )^*}$, ${( \cdot )^{-1}}$ denote the transpose, conjugate-transpose, conjugate, inverse operations, respectively. We adopt ${\left[ {\bf{a}} \right]_i}$, ${\left[ {\bf{A}} \right]_{i,j}}$ and ${\left[ {\mathbfcal A} \right]_{i,j,k}}$ to denote the $i$-th element of the vector ${\bf{a}}$, the $(i,j)$-th element of ${\bf{A}}$ and the $(i,j,k)$-th element of the tensor ${\mathbfcal A}$, respectively. All the indexes in this paper are started from $1$. $\otimes$ and $\odot $ denote the Kronecker product and the Hadamard product, respectively. The operator ${\rm Diag}\{  \cdot \} $ denotes the diagonal matrix of the vector along its main diagonal, while ${\rm diag}\{  \cdot \} $ denotes extracting the diagonal elements as a vector. ${\rm tr}\{  \cdot \}$ denotes the matrix trace operation and ${\rm{vec}}\left\{  \cdot  \right\}$ denotes the vectorization operation. 
		${{\mathbb{C}}^{M \times N}}$ and ${{\mathbb{R}}^{M \times N}}$ denote the ${M \times N}$ dimensional complex and real vector spaces, respectively. ${\mathbb{E}}\{  \cdot \} $ denotes the expectation operation. We adopt $ {{{\cal A} _1}} \times  {{{\cal A} _2}}$ as the Cartesian product of the sets ${{\cal A} _1}$ and ${{\cal A} _2}$. $\delta \left(  \cdot  \right)$ denotes the Dirac delta function. The set $\{ {a_p}\} _{p = 1}^P$ denotes $ \{ {a_1},{a_2}, \cdots ,{a_P}\} $. ${{\bf{F}}_N}$ denotes the $N\times N$ discrete Fourier transform (DFT) matrix. ${\bf{\Gamma }}_{M,m}$ represents
		\begin{equation}\label{Gamma}
			{\bf{\Gamma }}_{M,m} \buildrel \Delta \over = \left[ {\begin{array}{*{20}{c}}
					{\bf{0}}&{{{\bf{I}}_{M - {{\left\langle m \right\rangle }_M}}}}\\
					{{{\bf{I}}_{{{\left\langle m \right\rangle }_M}}}}&{\bf{0}}
			\end{array}} \right].
		\end{equation}
		
		Given a tensor  ${\mathbfcal X} \in {\mathbb{C}}^{{I_1} \times {I_2} \times  \cdots  \times {I_m} \times  \cdots  \times {I_M}}$ and a matrix ${\bf{X}} \in {{\mathbb{C}}^{{J} \times {I_m}}}$ , the \emph{m-mode product} between ${\mathbfcal X}$ and ${\bf{X}}$ is denoted as ${\bf{X}}{ \times _m}{\mathbfcal X}$ $ \in$ $ {{\mathbb{C}}^{{I_1} \times  \cdots  \times {I_{m - 1}} \times {J} \times {I_{m + 1}} \times  \cdots  \times {I_M}}}$. Elementwise, we have    \vspace{-0.1cm}
		\begin{equation}\label{mmode}
			\!\!\!{\left[ {{\bf{X}}{ \times _m}{\mathbfcal X}} \right]_{{i_1}, \cdots ,{j}, \cdots ,{i_M}}} = \sum\limits_{{i_m} = 1}^{{I_m}} {{{\left[ {\bf{X}} \right]}_{{j},{i_m}}}{{\left[ {\mathbfcal X} \right]}_{{i_1}, \cdots ,{i_m}, \cdots ,{i_M}}}}.
		\end{equation}
		
		With $\!\!{\mathbfcal A}$$\quad\!\!\!\! \in$ $\quad\!\!\!\! {{\mathbb{C}}^{{I_1} \times {I_2} \times  \cdots  \times {I_M} \times {J_1} \times {J_2} \times  \cdots  \times {J_K}}}$ and ${\mathbfcal B}$ $\quad\!\!\!\! \in $ $\quad\!\!\!\!{{\mathbb{C}}^{{J_1} \times {J_2} \times  \cdots  \times {J_K} \times {P_1} \times {P_2} \times  \cdots  \times {P_N}}}$, the \emph{Einstein} \emph{product} between ${\mathbfcal A}$ and ${\mathbfcal B}$ is denoted as ${{\mathbfcal A}{*_{K}}{\mathbfcal B}}$$ \in {{\mathbb{C}}^{{I_1} \times  \cdots  \times {I_M} \times {P_1} \times  \cdots  \times {P_N}}}$. Elementwise, we have 
		\begin{equation}\label{einstein}\resizebox{0.85\hsize}{!}
			{$\begin{array}{l}
					\!\!\!\!\!\!	{\left[ {{\mathbfcal A}{*_{K}}{\mathbfcal B}} \right]_{{i_1}, \cdots ,{i_M},{p_1}, \cdots ,{p_N}}}=\\\!\!\!\!\displaystyle\sum\limits_{{j_1} = 1}^{{J_1}} { \cdots \displaystyle\sum\limits_{{j_K} = 1}^{{J_K}} {{{\left[ {\mathbfcal A} \right]}_{{i_1}, \cdots ,{i_M},{j_1}, \cdots ,{j_K}}}} } {{\left[ {\mathbfcal B} \right]}_{{j_1}, \cdots ,{j_K},{p_1}, \cdots ,{p_N}}} .
				\end{array}$}
		\end{equation}
		

		

		\section{Physical Channel Fingerprints}\label{section2}

		In this section, we introduce the definition of PCF based on a cluster-based GBSM, which compactly and comprehensively capture the physical characteristics of the wireless channel using a set of parameters.
		
		\subsection{Cluster-based GBSM}
		
		We consider the propagation environment as shown in Fig.~\ref{txtorx1}, where the BS communicates with several UTs on the ground. Using UT$_1$ in Fig.~\ref{txtorx1} as an example, we introduce the channel modeling method between the BS and the UT. For analytical tractability, assume both the BS and UT employ uniform linear arrays (ULAs), enabling an effective two-dimensional (2D) propagation characterization. Additionally, as Fig.~\ref{txtorx2} shows, the scenario is discretized into multiple grids, assuming that the propagation environment within each grid remains approximately unchanged. For notational clarity, we define ${\bf{s}}_{\mathrm{BS}}$ as the BS grid coordinate and $\mathcal{S}$ as the set of all potential UT grid locations within the coverage area.


		\begin{figure}[t!]
			\centering
			\includegraphics[width =130pt]{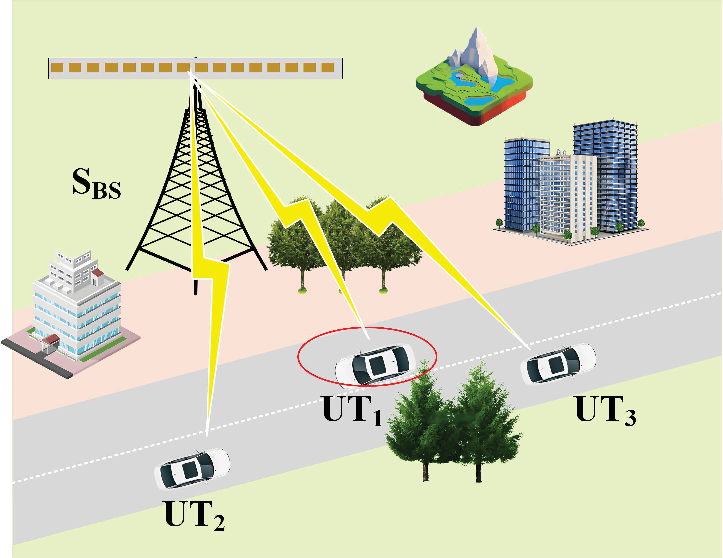}
			\caption{The propagation environment between a BS and several mobile UTs.}
			\label{txtorx1}
			
		\end{figure}

		We analyze the channel impulse response (CIR) between the UT and the BS over a given time period. During this period, the UT movement remains confined within a specific grid ${\bf{s}}$, moving at a speed of ${v_{{\rm{move}}}}$ in the direction ${{\alpha}_{{\rm{move}}}}$. Consequently, the propagation environment between the UT and the BS can be considered approximately static. The CIR function between the $a$-th antenna element at the BS and the $b$-th antenna element at the UT can therefore be expressed as \cite{9786750,9318511}
		\begin{equation}\label{nocluster}
			{{\tilde h}_{a,b}}\left( {t,\tau } \right) = \sum\limits_{l = 1}^{L({\bf{s}})} {\kappa _{{\bf{s}},l}^{}{e^{ - \bar \jmath 2\pi {f_{\rm{c}}}\tau _{{\bf{s}},a,b,l}^{}(t)}}\delta \left( {\tau  - \tau _{{\bf{s}},a,b,l}^{}(t)} \right)},	
		\end{equation}
		where ${L({{\bf{s}}})}$  is the total number of MPCs between the BS and ${{\bf{s}}}$,  ${\kappa _{{\bf{s}},l}^{}}$ and ${\tau _{{\bf{s}},a,b,l}(t)}$ denote the complex-valued gain and the delay of the $l$-th MPC, respectively. ${{f_{\rm{c}}}}$ represents the carrier frequency.

		For the uplink, the angle of departure (AoD) and the angle of arrival (AoA) of the $l$-th MPC are represented by ${\beta _{{\bf{s}},l}}$ and ${\theta _{{\bf{s}},l}}$, respectively. Then under the far-field assumption, the delay can be approximated as \cite{9910031}
		\begin{align}\label{tauab}
			\tau _{{\bf{s}},a,b,l}^{}(t) \approx& \tau _{{\bf{s}},l}^{} + \scalebox{1.2}{$\frac{{(a - 1){d_{{\rm{BS}}}}\cos (\theta _{{\bf{s}},l}^{})}}{c} + \frac{{(b-1){d_{{\rm{UT}}}}\cos (\beta _{{\bf{s}},l}^{})}}{c}$}\notag\\
			& \scalebox{1.2}{$- \frac{{{v_{{\rm{move}}}}t\cos \left( {{\alpha _{{\rm{move}}}} - \beta _{{\bf{s}},l}^{}} \right)}}{c}$},
		\end{align}
		where ${{d_{{\rm{BS}}}}}$ and ${{d_{{\rm{UT}}}}}$ represent the inter-antenna spacing at the BS and UT sides, respectively, and $\tau _{{\bf{s}},l}$ is determined by the traveling distance of the $l$-th MPC between the coordinates ${{{\bf{s}}_{{\rm{BS}}}}}$ and ${{{\bf{s}}}}$. $c$ represents the speed of the light. $\tau _{{\bf{s}},l}$ represents the propagation time of the $l$-th MPC between the coordinates ${{{\bf{s}}{\rm BS}}}$ and ${{\bf{s}}}$.

		By substituting (\ref{tauab}) into (\ref{nocluster}), the CIR can be expressed as
		\begin{align}\label{cir1}
			\!\!\!\!\!\!\!&\scalebox{0.94}{$\!\!\!\!\!\!\!{{\tilde h}_{a,b}}\left( {t,\tau } \right) = \sum\limits_{l = 1}^{L({\bf{s}})} {\underbrace {\kappa _{{\bf{s}},l}^{}{e^{ - \bar \jmath 2\pi {f_{\rm{c}}}\tau _{{\bf{s}},l}^{}}}}_{\alpha _{{\bf{s}},l}^{}}{e^{\bar \jmath 2\pi \nu _{{\bf{s}},l}^{}t}}} $} \notag\\
			&\scalebox{0.94}{${e^{ - \bar \jmath 2\pi {f_{\rm{c}}}\left( {\frac{{(a-1){d_{{\rm{BS}}}}\cos (\theta _{{\bf{s}},l}^{})}}{c} + \frac{{(b-1){d_{{\rm{UT}}}}\cos (\beta _{{\bf{s}},l}^{})}}{c}} \right)}}\delta \left( {\tau  - \tau _{{\bf{s}},a,b,l}^{}(t)} \right)$},
		\end{align}
		where ${\nu _{{\bf{s}},l}} = {v_{{\rm{move}}}}\cos \left( {{\alpha _{{\rm{move}}}} - {\beta _{{\bf{s}},l}^{}}} \right)/{\lambda_c}$ represents the Doppler shift of the $l$-th MPC between ${{{\bf{s}}_{{\rm{BS}}}}}$ and ${{\bf{s}}}$, and $\lambda _c$ represents the wavelength.
		
		According to  \cite{huang2020geometry, 9497668, 8693871}, MPCs with similar propagation characteristics can typically be grouped into a cluster, as illustrated in Fig.~\ref{txtorx2}. Therefore, the CIR in (\ref{cir1}) can be further expressed as
		\begin{align}\label{cir2}
			&\scalebox{0.9}{${{\tilde h}_{a,b}}\left( {t,\tau } \right) = \sum\limits_{p = 1}^{P({\bf{s}})} {\sum\limits_{q = 1}^{{L_p}({\bf{s}})} {\alpha _{{\bf{s}},l{}_{p,q}}^{}{e^{\bar \jmath 2\pi \nu _{{\bf{s}},{l_{p,q}}}^{}t}}} } $} \notag\\
			&\scalebox{0.89}{${e^{ - \bar \jmath 2\pi {f_{\rm{c}}}\left( {\frac{{(a-1){d_{{\rm{BS}}}}\cos \left( {\theta _{{\bf{s}},{l_{p,q}}}^{}} \right)}}{c} + \frac{{(b-1){d_{{\rm{UT}}}}\cos \left( {\beta _{{\bf{s}},{l_{p,q}}}^{}} \right)}}{c}} \right)}}\!\!\delta \left( {\tau  - \tau _{{\bf{s}},a,b,{l_{p,q}}}^{}(t)} \right)$},
		\end{align}
		where ${\alpha _{{\bf{s}},l_{p,q}}}$, ${\theta _{{\bf{s}},{l_{p,q}}}}$, ${\beta _{{\bf{s}},{l_{p,q}}}^{}}$, ${\tau _{{\bf{s}},a,b,{l_{p,q}}}(t)}$ and ${\nu _{{\bf{s}},{l_{p,q}}}}$ denote the complex-valued gain, AoA,  AoD, delay and Doppler shift of the $q$-th MPC in the $p$-th cluster between ${\bf{s}}$ and ${{{\bf{s}}_{{\rm{BS}}}}}$, respectively. ${{l_{p,q}}}$ denotes the index of the $q$-th MPC in the $p$-th cluster among all MPCs. ${P({{\bf{s}}})}$ is the number of clusters between ${{{\bf{s}}}}$ and ${{{\bf{s}}_{\rm {BS}}}}$, and ${L_p({{\bf{s}}})}$ is the number of MPCs in the $p$-th cluster. We have $L({\bf{s}}) = \sum\nolimits_{p = 1}^{P({\bf{s}})} {{L_p}({\bf{s}})}$. Within the same cluster, the angle and delay distributions of different MPCs can typically be well approximated by specific probability distribution functions (PDFs) \cite{huang2020geometry, 9497668, 8693871, 10225614, 9786750, 9318511}. Moreover, since MPCs within a cluster are generally spatially close to each other, as shown in  Fig.~\ref{txtorx2}, for $q \ne q'$, ${\mathbb E}\{ |{\alpha _{{\bf{s}},{l_{p,q}}}}{|^2}\}  \approx {\mathbb E}\{ |{\alpha _{{\bf{s}},{l_{p,q'}}}}{|^2}\}  $ holds \cite{9786750, 9318511}. Compared with (\ref{cir1}), the model in (\ref{cir2}) reflects the clustering of MPCs in the  environment, it is commonly referred to as a cluster-based GBSM.
		%
		

		\begin{figure}[t!]
			\centering
			\includegraphics[width =220pt]{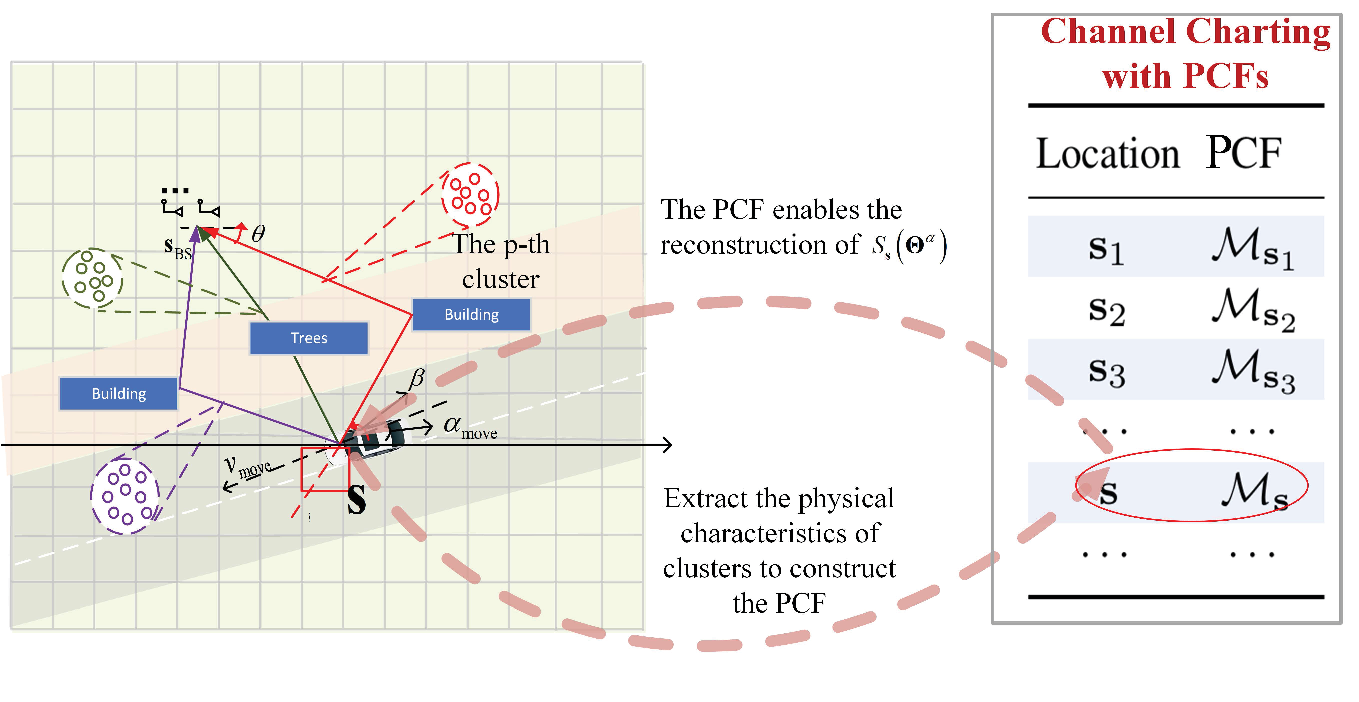}
			\caption{The relationship between the GBSM and PCFs.}
			\label{txtorx2}
			
		\end{figure}
		
		
		We now consider a single-cell TDD MIMO-OFDM system, where ${N_{\rm c}}$ denotes the number of subcarriers and ${N_{\rm g}}$ is the length of the cyclic prefix (CP). Given a sampling interval length of $T_{\rm s}$, the subcarrier spacing is defined as $\Delta f = \frac{1}{{{N_{\rm c}}{T_{\rm s}}}}$, and the duration of each OFDM symbol is ${T_{{\rm{sym}}}} = ({N_{\rm c}} + {N_{\rm g}}){T_{\rm s}}$. We assume that the channel remains quasi-static within each OFDM symbol but varies across symbols due to the Doppler effect. Under this assumption, the CIR in (\ref{cir2}) can be transformed into the SFT domain. The corresponding SFT channel transfer function (CTF) for the $k$-th subcarrier in the $n$-th OFDM symbol is given by:
		\begin{align}\label{CTF11}
			&\scalebox{0.9}{${h_{a,b}}\left( {n{T_{{\rm{sym}}}},k\Delta f} \right) = \sum\limits_{p = 1}^{P({\bf{s}})} {\sum\limits_{q = 1}^{{L_p}({\bf{s}})} {\alpha _{{\bf{s}},{l_{p,q}}}^{}{e^{\bar \jmath 2\pi \nu _{{\bf{s}},{l_{p,q}}}^{}n{T_{{\rm{sym}}}}}}} }$} \notag\\
			&\scalebox{0.85}{${e^{ - \bar \jmath 2\pi {f_{\rm{c}}}\left( {\frac{{(a-1){d_{{\rm{BS}}}}\cos \left( {\theta _{{\bf{s}},{l_{p,q}}}^{}} \right)}}{c} + \frac{{(b-1){d_{{\rm{UT}}}}\cos \left( {\beta _{{\bf{s}},{l_{p,q}}}^{}} \right)}}{c}} \right)}}{e^{-\bar \jmath 2\pi k\Delta f{\tau _{{\bf{s}},a,b,{l_{p,q}}}}(n{T_{{\rm{sym}}}})}}.$}	     
		\end{align}
		For $n$ satisfying ${v_{{\rm{move}}}}n{T_{\rm{sym}}}/c \ll {T_s}$, (\ref{CTF11}) can be further written as \cite{Shen}
		\begin{align}\label{CTF1}
			&\scalebox{0.9}{${h_{a,b}}\left( {n{T_{{\rm{sym}}}},k\Delta f} \right) = \sum\limits_{p = 1}^{P({\bf{s}})} {\sum\limits_{q = 1}^{{L_p}({\bf{s}})} {\alpha _{{\bf{s}},{l_{p,q}}}^{}{e^{\bar \jmath 2\pi \nu _{{\bf{s}},{l_{p,q}}}^{}n{T_{{\rm{sym}}}}}}} {e^{-\bar \jmath 2\pi k\Delta f{\tau _{{\bf{s}},{l_{p,q}}}}}}} $}\notag\\
			&\scalebox{0.9}{${e^{ - \bar \jmath 2\pi \left( {{f_{\rm{c}}} + k\Delta f} \right)\left( {\frac{{(a-1){d_{{\rm{BS}}}}\cos \left( {\theta _{{\bf{s}},{l_{p,q}}}^{}} \right)}}{c} + \frac{{(b-1){d_{{\rm{UT}}}}\cos \left( {\beta _{{\bf{s}},{l_{p,q}}}^{}} \right)}}{c}} \right)}}.$}
		\end{align}

		For any ${\bf{s}} \in {\cal S}$, given the parameters $[{\alpha _{{\bf{s}},{l_{p,q}}}^{}},{{\beta _{{\bf{s}},{l_{p,q}}}}},$ ${{\theta _{{\bf{s}},{l_{p,q}}}}},{{\tau _{{\bf{s}},{l_{p,q}}}}},$ ${\nu _{{\bf{s}},{l_{p,q}}}^{}}$]$^{\rm T}$ for each MPC, the SFT domain channel at the corresponding location can be reconstructed according to (\ref{CTF1}). Furthermore, if the UT mobility is known, the Doppler shift ${\nu _{{\bf{s}},{l_{p,q}}}^{}}$ can be calculated from ${{\beta _{{\bf{s}},{l_{p,q}}}}}$. By defining the channel gains as a function of ${{\bf{\Theta }}^\alpha } = [\beta, \theta, \tau]^{\rm T}$ as
		\begin{align}\label{gt}
			{g_{\bf{s}}}\left( {{{\bf{\Theta }}^\alpha }} \right) =\sum\limits_{p = 1}^{P({\bf{s}})}  \sum\limits_{q = 1}^{{L_p}({\bf{s}})}  &\alpha _{{\bf{s}},{l_{p,q}}}^{}\delta \left( {\theta  - {\theta _{{\bf{s}},{l_{p,q}}}}} \right)\delta \left( {\beta  - {\beta _{{\bf{s}},{l_{p,q}}}}} \right)\notag\\
			&\delta \left( {\tau  - {\tau _{{\bf{s}},{l_{p,q}}}}} \right),
		\end{align}
		the CTF expression in (\ref{CTF1}) can be reformulated as
		\begin{align}\label{CTF2}
			\!\!\!	{h_{a,b}}\left( {n{T_{{\rm{sym}}}},k\Delta f} \right) = \int {\int {\int {{g_{\bf{s}}}\left( {{{\bf{\Theta }}^\alpha }} \right){e^{\bar \jmath 2\pi \nu n{T_{\rm{sym}}}}}} } } {e^{-\bar \jmath 2\pi k\Delta f\tau }}\notag\\
			\!\!\!\!\!\!\!\!\!\!\!\!\!\!\!\!\!\!\!\!\!\!\!\!\!\!\!{e^{ - \bar \jmath 2\pi \left( {{f_{\rm{c}}} + k\Delta f} \right)\frac{{(a-1){d_{{\rm{BS}}}}\cos (\theta ) + (b-1){d_{{\rm{UT}}}}\cos (\beta )}}{c}}}d\theta d\beta d\tau,
		\end{align}
		where the Doppler shift is $\nu = {v_{{\rm{move}}}} \cos({\alpha_{{\rm{move}}}} - \beta)/{\lambda_{\rm c}}$.

			The function ${g_{{{\bf{s}}}}}\left( {{\bf{\Theta }}^\alpha } \right) $ characterizes the channel behavior in the AoD-AoA-delay (AAD) domain at a given UT location ${\bf{s}}$. The CTF in (\ref{CTF1}) is fully determined by ${g_{{{\bf{s}}}}}\left( {{\bf{\Theta }}^\alpha } \right) $ that is solely governed by the propagation environment between  ${{{\bf{s}}_{{\rm{BS}}}}}$ and ${{{\bf{s}}}}$. To capture the statistical properties of the channel at each location, we define the function ${S_{{{\bf{s}}}}}\left( {{\bf{\Theta }}^\alpha } \right) = {\mathbb E}\left\{ {{{\left| {{g_{{{\bf{s}}}}}\left( {{\bf{\Theta }}^\alpha } \right)} \right|}^2}} \right\}$ for all ${\bf{s}} \in {\cal S}$, which describes the power distribution in the AAD domain for the channel between ${{\bf{s}}_{\rm BS}}$ and ${\bf{s}}$. In the next subsection, we investigate the relationship between $S_{\mathbf{s}}(\boldsymbol{\Theta}^\alpha)$ and the statistical parameters of each cluster, based on which we introduce the concept of the PCF. 

			\subsection{Physical Channel Fingerprints}
			
			The function ${S_{{{\bf{s}}}}}({\bf{\Theta }}^\alpha)$ statistically characterizes the propagation environment between the BS and ${\bf{s}}$. Its unified multi-domain representation of channel statistics holds the potential to be transformed into various forms of sCSI for different applications. This naturally motivates the construction of a location-indexed charting, where each entry corresponds to ${S_{{\bf{s}}}}({\bf{\Theta}}^\alpha)$ that serves as the PCF for ${\bf{s}}$.

			However, as a multi-dimensional function, directly storing ${S_{{{\bf{s}}}}}\left( {{\bf{\Theta }}^\alpha } \right)$ for different ${\bf{s}} \in {\cal S}$ to form a channel charting with PCFs is impractical for applications. Instead, we can fully exploit the statistical characteristics of the intra-cluster physical parameters to reformulate ${S_{{{\bf{s}}}}}\left( {{\bf{\Theta }}^\alpha } \right)$. By adopting suitable functional representations, the delay PDF and angle PDFs of MPCs within the same cluster can be effectively modeled \cite{9786750, 9318511,10225614, huang2020geometry, 10288151}. As a result, ${S_{\bf{s}}}\left( {{{\bf{\Theta }}^\alpha }} \right)$ can be approximated as
			\begin{equation}\label{St}
				{S_{\bf{s}}}\left( {{{\bf{\Theta }}^\alpha }} \right) \approx \sum\limits_{p = 1}^{P({\bf{s}})} { {{\gamma _{{\bf{s}},p}}f_{{\bf{s}},p}^{\rm{\upbeta }}\left( \beta  \right)f_{{\bf{s}},p}^{\rm{\uptheta }}\left( \theta  \right)f_{{\bf{s}},p}^{\rm{\uptau }}\left( \tau  \right)}},
			\end{equation}
			where ${\gamma _{{\bf{s}},p}} = {\mathbb{E}}\left\{ {\sum\nolimits_{q = 1}^{{L_p}({\bf{s}})} {|\alpha _{{\bf{s}},{l_{p,q}}}^{}{|^2}} } \right\}$ represents the average power of the $p$-th cluster, ${f_{{\bf{s}},p}^{\rm{\upbeta }}\left( \beta  \right)}$, ${f_{{\bf{s}},p}^{\rm{\uptheta }}\left( \theta  \right)}$ and ${f_{{\bf{s}},p}^{\rm{\uptau }}\left( \tau  \right)}$ represent the PDFs of the AoD, AoA and delay of MPCs in the $p$-th cluster, respectively. All the PDFs involved in (\ref{St}) can typically be approximated using a unimodal symmetric function, such as the PDF of Gaussian distribution \cite{9318511,9786750}, Laplacian distribution \cite{huang2020geometry} or von Mise distribution \cite{10288151}.

			By employing the expression in (\ref{St}) to characterize the AAD domain power distribution, the channel charting with PCFs can be efficiently compressed by storing only the key parameters of the underlying PDFs, rather than the full representation of ${S_ {{{\bf{s}}}}}\left( {{\bf{\Theta }}^\alpha } \right)$. We can adopt the PDFs of Gaussian distributions to fit ${f_{{\bf{s}},p}^{\rm{\upbeta }}\left( \beta  \right)}$, ${f_{{\bf{s}},p}^{\rm{\uptheta }}\left( \theta  \right)}$ and ${f_{{\bf{s}},p}^{\rm{\uptau }}\left( \tau  \right)}$ for different clusters. For instance, the AoD PDF, ${f_{{\bf{s}},p}^{\rm{\upbeta }}\left( \beta  \right)}$, can be approximated as 
			\begin{equation}\label{Gaussian}
				f_{{\bf{s}},p}^{\rm{\upbeta }}\left( \beta  \right) = \frac{1}{{\sqrt {2\pi } \sigma _{{\bf{s}},p}^{\rm{\upbeta }}}}\exp \left( { - {{(\beta  - {{\bar \beta }_{{\bf{s}},p}})}^2}/2{{(\sigma _{{\bf{s}},p}^{\rm\upbeta} )}^2}} \right),
			\end{equation}
			where ${{{\bar \beta }_{{\bf{s}},p}}}$ and ${\sigma _{{\bf{s}},p}^{\rm{\upbeta }}}$ are the mean AoD and the standard derivation of the Gaussian distribution, respectively. Similarly, ${f_{{\bf{s}},p}^{\rm{\uptheta }}\left( \theta  \right)}$ and ${f_{{\bf{s}},p}^{\rm{\uptau }}\left( \tau  \right)}$ can be characterized by the parameters $\{ {{\bar \theta }_{{\bf{s}},p}},\sigma _{{\bf{s}},p}^\uptheta \} $ and $\{ {{\bar \tau }_{{\bf{s}},p}},\sigma _{{\bf{s}},p}^\uptau \} $, respectively. It is worth noting that, due to the inherent non-negativity of delay, the delay PDF is more accurately modeled as a truncated Gaussian distribution over the interval $[0, +\infty)$ \cite{huang2020geometry}. 

			Given the parameters $\{ {{\bar \beta }_{{\bf{s}},p}},\sigma _{{\bf{s}},p}^\upbeta \} $, $\{ {{\bar \theta }_{{\bf{s}},p}},\sigma _{{\bf{s}},p}^\uptheta \} $ and $\{ {{\bar \tau }_{{\bf{s}},p}},\sigma _{{\bf{s}},p}^\uptau \} $, and ${{\gamma _{{\bf{s}},p}}}$ for $p = 1,2, \cdots ,P({{\bf{s}}})$, the AAD domain power distribution corresponding to the location  ${\bf{s}}$  can be reconstructed. In this case, we define the vector
			\begin{align}\label{SCF}
				{\bf{m}}_{{{\bf{s}},p}} = [{{\bar \beta }_{{\bf{s}},p}},\sigma _{{\bf{s}},p}^\upbeta, {{\bar \theta }_{{\bf{s}},p}},\sigma _{{\bf{s}},p}^\uptheta, {{\bar \tau }_{{\bf{s}},p}},\sigma _{{\bf{s}},p}^\uptau,{{\gamma _{{\bf{s}},p}}}{]^{\rm{T}}},
			\end{align} 
			and the set ${{\cal M}_{{{\bf{s}}}}} = \{ {\bf{m}}_{{{\bf{s}},p}}\} _{p = 1}^{P({{\bf{s}}})}$. Then ${{\cal M}_{{{\bf{s}}}}}$ can fully characterize the physical property of the channel between ${{{\bf{s}}_{{\rm{BS}}}}}$ and ${{{\bf{s}}}}$. Furthermore, given the UT mobility and location, ${{\cal M}_{{{\bf{s}}}}}$ can also be used to analyze the channel behavior in the Doppler domain. By using ${{\cal M}_{{{\bf{s}}}}}$ in place of ${S_{{{\bf{s}}}}}\left( {{\bf{\Theta }}^\alpha } \right)$ as the PCF at location ${\bf{s}}$, the storage burden of the channel charting with PCFs is significantly alleviated. As such, ${{\cal M}_{{{\bf{s}}}}}$ satisfies both the physics-driven property and storage efficiency outlined in Section~\ref{section1}, and is adopted as the PCF throughout this work. Its practical benefits and applications will be discussed in subsequent sections.

			The aforementioned PCF is specifically defined for the scenario illustrated in Fig. \ref{txtorx1}. Nevertheless, its physics-driven property provides a natural foundation for extension to more realistic 6G propagation scenarios. Specifically, the proposed PCF is inherently compatible with different forms of channel non-stationarity, including spatial, frequency-dependent, and longer-term environmental variations (e.g., seasonal changes). The channel charting can be further extended from a static location-indexed representation to that incorporating the dynamic wireless environments. In all these scenarios, the core principle remains unchanged: the PCF compactly captures the essential physical characteristics of the propagation environment using a scalable and physics-interpretable parameter set.

			\section{PCF Acquisition in Massive MIMO-OFDM Systems}\label{section3}
			
			In this section, we propose a methodology for PCF acquisition in massive MIMO-OFDM systems. We begin by formulating the SFT domain channel model and establishing its relationship with the physical channel parameters. An uplink received signal model is then constructed to enable SFT domain channel estimation. From the estimated channels, we apply a SAGE-based algorithm to extract PCFs at different UT locations. By aggregating these location-specific PCFs, a structured channel charting with PCFs can be constructed, providing a location-indexed representation of the physical channel environment.

			


			\subsection{SFT Domain Channel Model}\label{section3a}
			
			\begin{figure}[t!]
				\centering
				\includegraphics[width =180pt]{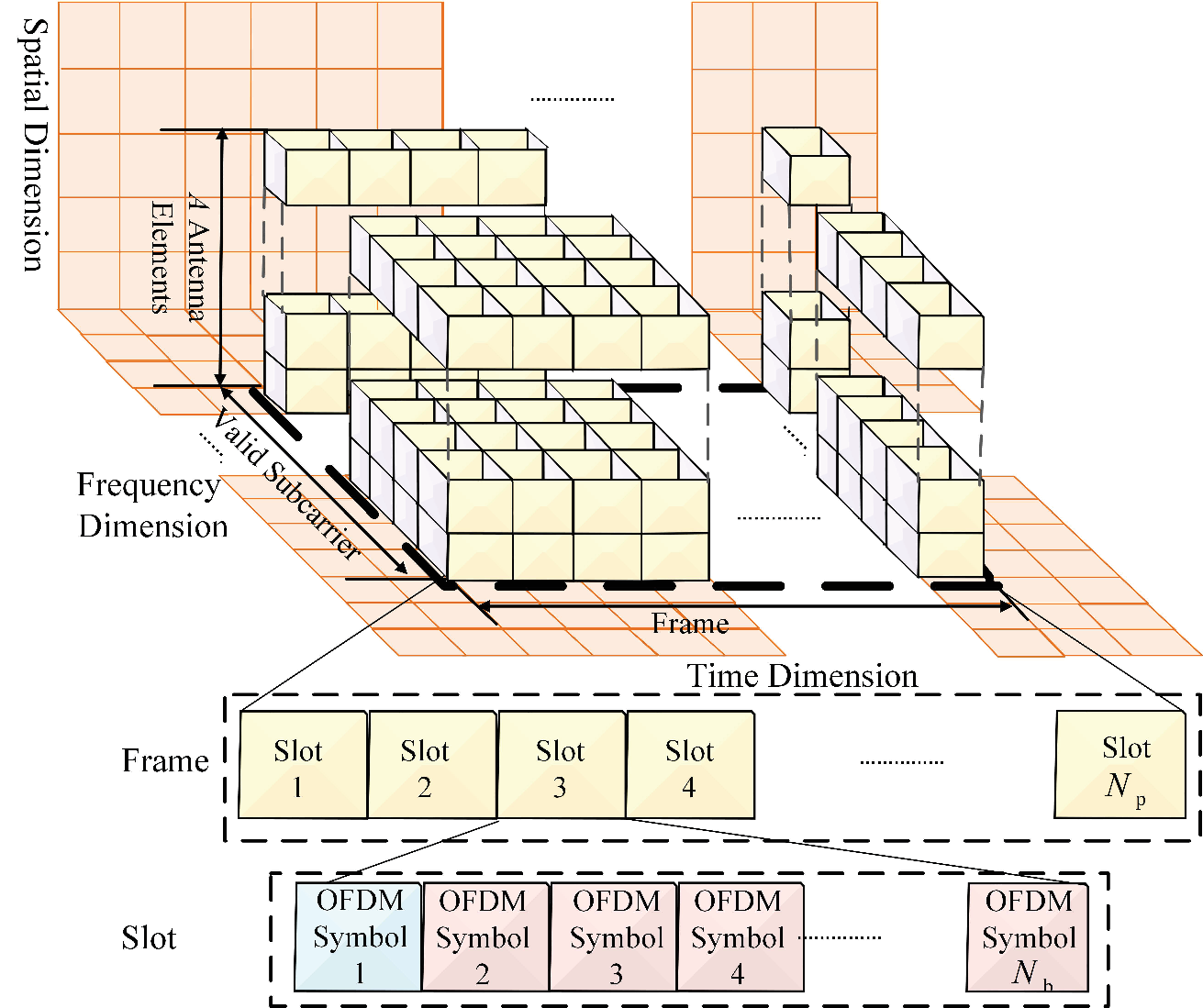}
				\caption{ The structure of the SFT domain channel tensor for the special case of $B = 1$.}
				\label{frame11}
			\end{figure}
			We consider the scenario depicted in Fig.\ref{txtorx1}, where ULAs with dimensions of $A \times 1$ and $B \times 1$ are configured at the BS and UT sides, respectively. Along the subcarrier dimension of the OFDM system, we select $K$ valid subcarriers for transmitting pilot signals, and denote the set of indices of these valid subcarriers as ${\cal K} = \{ {k_0},{k_0} + 1, \cdots ,{k_0} + K - 1\}$. 
			

			Along the time dimension, we assume that the UT moves within a single grid $\mathbf{s}$ during the analysis period. This period is divided into multiple frames, each comprising $N_{\mathrm{s}}$ OFDM symbols. As an illustrative example, Fig.~\ref{frame11} depicts the structure of the SFT domain channel within a single frame for the special case of $B = 1$. As depicted, each frame is further partitioned into $N_{\mathrm{p}}$ slots, with each slot comprising $N_{\mathrm{b}}$ OFDM symbols, such that ${N_{\mathrm{s}}} = {N_{\mathrm{p}}}{N_{\mathrm{b}}}$. Within each slot, the first OFDM symbol (highlighted in light blue) is allocated for uplink pilot transmission to facilitate channel estimation, while the remaining symbols (highlighted in light red) are used for uplink and downlink data transmission. Under this structure, there are $N_{\mathrm{p}}$ pilot segments per frame, and the interval between two pilot transmissions equals the duration of one slot.
			

			To generate the PCF for a given location $\mathbf{s}$ under a MIMO-OFDM configuration, we first establish the SFT domain channel model between the BS and $\mathbf{s}$, and clarify the relationship between the SFT domain channel and the PCF. Our analysis is conducted on a frame-by-frame basis. Let $\tilde{n}$ denote the index of the current slot. The OFDM symbol index is counted from 0 along the time dimension. From the frame structure in Fig.~\ref{frame11}, the $\tilde{n}$-th slot, along with its preceding $({N_{\mathrm{p}}} - 1)$ slots, forms the current frame. With ${t_{\tilde n,n}} \buildrel \Delta \over = \tilde n - {N_{\rm{p}}} + n$, the set of indices for all OFDM symbols in the pilot segments of the current frame is denoted as ${{\cal T}_{\tilde n}} = \{ {t_{\tilde n,1}}{N_{\rm{b}}},{t_{\tilde n,2}}{N_{\rm{b}}}, \cdots ,{t_{\tilde n,{N_{\rm{p}}}}}{N_{\rm{b}}}\} $. We then define an ${A \times K \times {N_{\mathrm{p}}} \times B}$ tensor, denoted as ${{\mathbfcal H}_{{\bf{s}},\tilde{n}}^{{\rm{SFT}}}}$, where
			\begin{equation}\label{htensor}
				{[{\mathbfcal H}_{{\bf{s}},\tilde n}^{{\rm{SFT}}}]_{a,k,n,b}} = {h_{a,b}}\left( {{t_{\tilde n,n}}{N_{\rm{b}}}{T_{{\rm{sym}}}},k\Delta f} \right),
			\end{equation}
			to represent the SFT domain channel between the BS and UT across all valid subcarriers in the pilot segments of the current frame. The definition of ${h_{a,b}}\left( \cdot \right)$ is provided in (\ref{CTF1}). Let ${\bf{h}}_{{\bf{s}},\tilde{n}}^{{\rm{SFT}}} \buildrel \Delta \over = {\rm{vec\{ }}{\mathbfcal H}_{{\bf{s}},\tilde{n}}^{{\rm{SFT}}}{\rm{\} }}$ denote the vectorization of ${\mathbfcal H}_{{\bf{s}},\tilde{n}}^{{\rm{SFT}}}$. With (\ref{CTF1}), ${\bf{h}}_{{\bf{s}},\tilde{n}}^{{\rm{SFT}}}$ can also be calculated by
			\begin{equation}\label{hsft}
				{\bf{h}}_{{\bf{s}},\tilde n}^{{\rm{SFT}}}{\rm{ = }}\sum\limits_{p = 1}^{P({\bf{s}})} {\sum\limits_{q = 1}^{{L_p}({\bf{s}})} {\alpha _{{\bf{s}},{l_{p,q}}}^{}{\bf{v}}_{\tilde n}^{{\rm{sft}}}(\beta _{{\bf{s}},{l_{p,q}}}^{},\theta _{{\bf{s}},{l_{p,q}}}^{},{\tau _{{\bf{s}},{l_{p,q}}}})} },
			\end{equation}
			where the vector ${{\bf{v}}_{\tilde n}^{{\rm{sft}}}(\beta _{{\bf{s}},{l_{p,q}}}^{},\theta _{{\bf{s}},{l_{p,q}}}^{},{\tau _{{\bf{s}},{l_{p,q}}}})}$ satisfies
			\begin{align}\label{vsft}
				\!\!\!\!\!\!\!&{[{\bf{v}}_{\tilde n}^{{\rm{sft}}}(\beta ,\theta ,\tau )]_{bAK{N_{\rm{p}}}{\rm{ + }}nAK + kA + a}} = \scalebox{0.8}{$\exp \left\{ {\bar \jmath 2\pi \left[ -{k\Delta f\tau } \right.} \right.$}\notag\\
				&\scalebox{0.78}{$\left. { + {\frac{{{v_{{\rm{move}}}}\cos \left( {{\alpha _{{\rm{move}}}} - \beta } \right){t_{\tilde n,n}}{N_{\rm{b}}}{T_{{\rm{sym}}}}}}{{\lambda}_c }} - \left( {{f_{\rm{c}}} + k\Delta f} \right)\left. {\frac{{(a-1){d_{{\rm{BS}}}}\cos (\theta ) + (b-1){d_{{\rm{UT}}}}\cos (\beta )}}{c}} \right]} \right\}.$}
			\end{align}
			It should be noted that in (\ref{vsft}), the Doppler shift value is replaced by $\nu  = {v_{{\rm{move}}}}\cos \left( {{\alpha _{{\rm{move}}}} - \beta } \right)/{\lambda_c}$.
			
			Define the following vectors: ${\bf{v}}_{\rm{\uptheta }}^{\rm{s}}\left( \theta  \right) \in {{\mathbb{C}}^{A \times 1}}$, where ${\left[ {{\bf{v}}_{\rm{\uptheta }}^{\rm{s}}\left( \theta  \right)} \right]_a} = {e^{ - \bar \jmath 2\pi (a - 1){f_{\rm{c}}}\frac{{{d_{{\rm{BS}}}}}}{c}\cos \theta }}$; ${\bf{v}}_{}^{\rm{f}}\left( \tau  \right) \in {{\mathbb{C}}^{K \times 1}}$, where ${\left[ {{\bf{v}}_{}^{\rm{f}}\left( \tau  \right)} \right]_k} = {e^{ - \bar \jmath 2\pi ({k_0} + k - 1)\Delta f\tau }}$; ${\bf{v}}_{\rm{\upbeta }}^{\rm{s}}\left( \beta  \right) \in {{\mathbb{C}}^{B \times 1}}$ and ${\left[ {{\bf{v}}_{\rm{\upbeta }}^{\rm{s}}\left( \beta  \right)} \right]_b} = {e^{ - \bar \jmath 2\pi (b - 1){f_{\rm{c}}}\frac{{{d_{{\rm{UT}}}}}}{c}\cos \beta }}$; ${\bf{v}}_{\upbeta ,\tilde n}^{\rm{t}}\left( \theta  \right) \in {{\mathbb{C}}^{{N_{\rm{p}}} \times 1}}$, where ${\left[ {{\bf{v}}_{\upbeta ,\tilde n}^{\rm{t}}\left( \beta  \right)} \right]_n} = {e^{\bar \jmath 2\pi {t_{\tilde n,n}}{N_{\rm{b}}}{T_{{\rm{sym}}}}\frac{{{v_{{\rm{move}}}}}}{\lambda _c }\cos \left( {\beta  - {\alpha _{{\rm{move}}}}} \right)}}$. Specifically, when the
			spatial wideband effect of the channel can be ignored, ${\bf{v}}_{\tilde n}^{{\rm{sft}}}(\beta ,\theta ,\tau )$ can be expressed as \cite{10179246}
			\begin{equation}
				{\bf{v}}_{\tilde n}^{{\rm{sft}}}(\beta ,\theta ,\tau ) = {\bf{v}}_{\rm{\uptheta }}^{\rm{s}}\left( \theta  \right) \otimes {\bf{v}}_{}^{\rm{f}}\left( \tau  \right) \otimes {\bf{v}}_{\upbeta ,\tilde n}^{{\rm{st}}}\left( \beta  \right),
			\end{equation}
			and the SFT domain channel model in (\ref{hsft}) simplifies to 
			\begin{align}\label{hsft1}
				{\bf{h}}_{{\bf{s}},\tilde n}^{{\rm{SFT}}}\!\! = \!\!\scalebox{0.95}{$\sum\limits_{p = 1}^{P({\bf{s}})} {\sum\limits_{q = 1}^{{L_p}({\bf{s}})}  \alpha _{{\bf{s}},{l_{p,q}}}^{}{\bf{v}}_{\rm{\uptheta }}^{\rm{s}}\left( {{\theta _{{\bf{s}},{l_{p,q}}}}} \right) \otimes {\bf{v}}_{}^{\rm{f}}\left( {{\tau _{{\bf{s}},{l_{p,q}}}}} \right) \otimes {\bf{v}}_{\upbeta ,\tilde n}^{{\rm{st}}}\left( {{\beta _{{\bf{s}},{l_{p,q}}}}} \right)},$}
			\end{align} 
			where  ${\bf{v}}_{\upbeta ,\tilde n}^{{\rm{st}}}\left( \beta  \right) = {\bf{v}}_{\upbeta ,\tilde n}^{\rm{t}}\left( \beta  \right) \otimes {\bf{v}}_{\rm{\upbeta }}^{\rm{s}}\left( \beta  \right)$.

			Equations (\ref{hsft}) and (\ref{hsft1}) establish the relationship between the SFT domain channel and the physical parameters $\{ \alpha _{{\bf{s}},{l_{p,q}}}^{},{\beta _{{\bf{s}},{l_{p,q}}}},{\theta _{{\bf{s}},{l_{p,q}}}},{\tau _{{\bf{s}},{l_{p,q}}}}\}$ of different MPCs. Since PCFs characterize the statistical properties of these parameters, estimating the physical parameters of individual MPCs is a prerequisite for PCF acquisition.

			Given that the clustering of the channel is unknown prior to estimating the MPC parameters, we rewrite (\ref{hsft}) and express ${\bf{h}}_{{\bf{s}},\tilde{n}}^{{\rm{SFT}}}$ as 
			\begin{equation}
				{\bf{h}}_{{\bf{s}},\tilde n}^{{\rm{SFT}}}{\rm{ = }}\sum\limits_{l = 1}^{L({\bf{s}})} {\alpha _{{\bf{s}},l}^{}{\bf{v}}_{\tilde n}^{{\rm{sft}}}({\beta _{{\bf{s}},l}},{\theta _{{\bf{s}},l}},{\tau _{{\bf{s}},l}})},
			\end{equation}
			where $\{ \alpha _{{\bf{s}},l}^{},{\beta _{{\bf{s}},l}},{\theta _{{\bf{s}},l}},{\tau _{{\bf{s}},l}}\}$ are the parameters to be estimated of the $l$-th MPC.

			In practical operations, the tensor ${\mathbfcal H}_{{\bf{s}},\tilde{n}}^{{\rm{SFT}}}$ (or its vectorized form ${\bf{h}}_{{\bf{s}},\tilde{n}}^{{\rm{SFT}}}$) is obtained through uplink measurements \cite{10075521}. Specifically, the UT moves in the vicinity of $\mathbf{s}$ at a constant speed $v_{\mathrm{move}}$ along a fixed direction $\alpha_{\mathrm{move}}$, while all its antennas simultaneously transmit uplink pilot signals to the BS. The BS then estimates ${\mathbfcal H}_{{\bf{s}},\tilde{n}}^{{\rm{SFT}}}$ from the received signals. By performing joint multi-symbol channel estimation on a frame-by-frame basis, the received signal at the BS for the current frame can be expressed as
			\begin{equation}
				{\mathbfcal Y}{\rm{ = }}\sum\limits_{b = 1}^B {{\bf{X}}_{b,\tilde n}^{\rm{t}}{ \times _3}\left( {{\bf{X}}_b^{\rm{f}}{ \times _2}{{[{\mathbfcal H}_{{\bf{s}},\tilde n}^{{\rm{SFT}}}]}_{:,:,:,b}}} \right)}   + {\mathbfcal Z
				},
			\end{equation}
			where ${\mathbfcal Y} \in {{\mathbb{C}}^{A \times K \times {N_{\rm{p}}}}}$ and ${[{\mathbfcal Y}]_{a,k,n}}$ is the received signal of the $a$-th antenna element at the BS on the $k$-th OFDM signal of the $n$-th pilot segment. ${\bf{X}}_{b,\tilde n}^{\rm{t}} \in {{\mathbb{C}}^{{N_{\rm{p}}} \times {N_{\rm{p}}}}}$ and ${\bf{X}}_b^{\rm{f}} \in {{\mathbb{C}}^{K \times K}}$ are all diagonal matrices as ${\bf{X}}_b^{\rm{f}} = {\rm{Diag\{ }}x_{b,{k_0}}^{\rm{f}},x_{b,{k_0} + 1}^{\rm{f}}, \cdots ,x_{b,{k_0} + K - 1}^{\rm{f}}{\rm{\} }}$ and ${\bf{X}}_{b,\tilde n}^{\rm{t}} = {\rm{Diag}}\{ x_{b,{t_{\tilde n,1}}}^{\rm{t}},x_{b,{t_{\tilde n,2}}}^{\rm{t}}, \cdots ,x_{b,{t_{\tilde n,{N_{\rm{p}}}}}}^{\rm{t}}\}$. And $x_{b,n}^{\rm{t}}x_{b,{k}}^{\rm{f}}$ is the pilot signal transmitted from the $b$-th antenna elements at the UT on the $k$-th subcarrier of the $n$-th OFDM symbol. 
			
			Given its superior properties for SFT domain channel estimation, we employ the two-dimensional time-frequency phase-shifted pilots (TFPSPs) as the pilot signals \cite{mine,10693363,10146318}. TFPSPs are typically constructed by modulating Zadoff-Chu (ZC) sequences with phase-shifted sequences, following the general format:
			\begin{subequations}\label{tfpsp}
				\begin{align}
					&x_{b,k}^{\rm{f}} = {e^{ - \bar \jmath 2\pi k{\phi _b}/K}} \cdot {e^{2\pi {\gamma _{\rm{f}}}\frac{{k\left( {k + {{\left\langle K \right\rangle }_2}} \right)}}{K}}},k \in {\cal K},\\
					&x_{b,n}^{\rm{t}} = {e^{ - \bar \jmath 2\pi n{\varphi _b}/{N_{\rm{p}}}}} \cdot {e^{2\pi {\gamma _{\rm{t}}}\frac{{n\left( {n + {{\left\langle {{N_{\rm{p}}}} \right\rangle }_2}} \right)}}{{{N_{\rm{p}}}}}}},n \in {{\mathbb{N}}^ + },
				\end{align}
			\end{subequations}
			where ${{\phi _b}}$ and ${{\varphi _b}}$ represent the phase-shifted factors in the frequency domain and time domain, respectively. ${{\gamma _{\rm{f}}}}$ and ${{\gamma _{\rm{t}}}}$ denote the roots of the frequency-domain and time-domain ZC sequences, respectively, which are chosen as positive integers coprime with $K$ and ${{N_{\rm{p}}}}$, respectively. Let ${\nu _{\max }}{\rm{ = 2}}{v_{{\rm{move}}}}/{\lambda _{\rm{c}}}$, ${N_{\rm{f}}} = \left\lceil {K{N_{\rm{g}}}/{N_{\rm{c}}}} \right\rceil$, and ${N_{\rm{d}}} = \left\lceil {{\nu _{\max }}{T_{{\rm{sym}}}}{N_{\rm{s}}}} \right\rceil$. According to the property of TFPSPs in \cite{mine}, the pilot sequences associated with different transmit antennas become mutually non-interfering when their phase-shifted factors are sufficiently separated. Specifically, for any two distinct antenna elements $b$ and $b'$, if the phase-shifted factors satisfy
			\begin{equation}\label{tfpsp1}
				\left| {{\phi _b} - {\phi _{b'}}} \right| \ge {N_{\rm{f}}},
			\end{equation}
			or
			\begin{equation}\label{tfpsp2}
				\left| {{\varphi _b} - {\varphi _{b'}}} \right| \ge {N_{\rm{d}}},
			\end{equation}
			the pilot interference among antenna elements can be eliminated.
			
			Under the conditions specified in (\ref{tfpsp1}) and (\ref{tfpsp2}), let ${K_{\rm{f}}} = \left\lfloor {K{\rm{/}}{N_{\rm{f}}}} \right\rfloor$ and ${K_{\rm{t}}} = \left\lfloor {{N_{\rm{p}}}/{N_{\rm{d}}}} \right\rfloor$. Then, when $B \le {K_{\rm{f}}}{K_{\rm{t}}}$, for the $b$-th antenna at the UT, the phase-shifted factors can be set as
			\begin{equation}
				{\phi _b} = {\left\langle {b - 1} \right\rangle _{{K_{\rm{f}}}}}{N_{\rm{f}}},\quad {\varphi _b} = \left\lfloor {(b - 1)/{K_{\rm{f}}}} \right\rfloor {N_{\rm{d}}},
			\end{equation}
			to satisfy either (\ref{tfpsp1}) or (\ref{tfpsp2}). When $B > {K_{\rm{f}}}{K_{\rm{t}}}$, the pilot capacity can be expanded by generating distinct values of ${{\gamma _{\rm{f}}}}$ and ${{\gamma _{\rm{t}}}}$ \cite{10146318}. With the above pilot settings, we can transmit TFPSPs and ultilize the method in \cite{10839435} to estimate ${{\mathbfcal H}_{{\bf{s}},\tilde n}^{{\rm{SFT}}}}$ from the received ${\mathbfcal Y}$. 
			
			\subsection{Generating PCF from the Estimated SFT Domain Channel}
			
			
			By applying the pilot design and channel estimation procedure outlined in the previous subsection, we obtain an estimate of ${{\mathbfcal H}_{{\bf{s}},\tilde n}^{{\rm{SFT}}}}$, denoted as $\hat {\mathbfcal H}_{{\bf{s}},\tilde n}^{{\rm{SFT}}}$. Subsequently, a SAGE-based method can be employed to estimate the MPC parameters with improved accuracy.

			Let ${\bf{\hat h}}_{{\bf{s}},\tilde n}^{{\rm{SFT}}} = {\rm{vec}}\left\{ {\hat {\mathbfcal H}_{{\bf{s}},\tilde n}^{{\rm{SFT}}}} \right\}$. Before extracting MPC parameters from
			${\bf{\hat h}}_{{\bf{s}},\tilde n}^{{\rm{SFT}}}$, the number of MPCs is typically unknown. Therefore, we predefine a value $\hat L({{\bf{s}}})$ as the expected number of MPCs in the scenario. For the MPC parameter  ${\bf{\Theta }} = [\alpha ,\beta ,\theta ,\tau]^{\rm T}$, we define the following function:
			\begin{equation}
				{{\bf{h}}_{\tilde n}}({\bf{\Theta }}) = \alpha {\bf{v}}_{\tilde n}^{{\rm{sft}}}(\beta ,\theta ,\tau ).
			\end{equation} 
			Then under the maximum likelihood (ML) criterion \cite{10494205}, the MPC parameter estimation problem can be interpreted as finding the MPC parameters ${{{\bf{\hat \Phi }}}_{\bf{s}}} = [{{{\bf{\hat \Theta }}}_{{\bf{s}},1}},{{{\bf{\hat \Theta }}}_{{\bf{s}},2}}, \cdots ,{{{\bf{\hat \Theta }}}_{{\bf{s}},\hat L({\bf{s}})}}]$, such that
			\begin{equation}\label{pro1}
				{{{\bf{\hat \Phi }}}_{\bf{s}}} = \scalebox{0.9}{$\mathop {\arg \min }\limits_{[{{\bf{\Theta }}_{{\bf{s}},1}},{{\bf{\Theta }}_{{\bf{s}},2}}, \cdots ,{{\bf{\Theta }}_{{\bf{s}},\hat L({\bf{s}})}}]} \left\| {{\bf{\hat h}}_{{\bf{s}},\tilde n}^{{\rm{SFT}}} - \sum\limits_{l = 1}^{\hat L({\bf{s}})} {{{\bf{h}}_{\tilde n}}({{\bf{\Theta }}_{{\bf{s}},l}})} } \right\|_{\rm{F}}^2.$}
			\end{equation}


			In massive MIMO-OFDM systems, the assumption of uncorrelated scattering is approximately valid \cite{you2015pilot}, and ${{\bf{h}}_{\tilde n}}({{\bf{\Theta }}_{{\bf{s}},l}})$ and ${{\bf{h}}_{\tilde n}}({{\bf{\Theta }}_{{\bf{s}},l'}})$ can be approximately considered orthogonal for $l \ne l'$. This enables the direct application of the SAGE algorithm, which is a variant of the expectation-maximization (EM) algorithm that employs successive interference cancellation (SIC) \cite{753729,sage11,10179246}. The SAGE algorithm is adopted for its distinct advantages: it performs full MPC parameter estimation, is not constrained by array geometry or snapshot requirements, and provides a computationally efficient and scalable solution suitable for large-dimensional systems. Accordingly, the SAGE algorithm transforms the computationally intensive global search for ${{{\bf{\hat \Phi }}}_{\bf{s}}}$ in (\ref{pro1}) into a sequential estimation of ${{{\bf{\hat \Theta }}}_{{\bf{s}},1}},{{{\bf{\hat \Theta }}}_{{\bf{s}},2}}, \cdots ,{{{\bf{\hat \Theta }}}_{{\bf{s}},\hat L({\bf{s}})}}$. In the parameter estimation for the $l$-th MPC, we define
			\begin{equation}\label{ykl}
				\scalebox{0.9}{${\bf{\hat h}}_{{\bf{s}},\tilde n,l}^{{\rm{SFT}}} = \left\{ {\begin{array}{*{20}{c}}
							{\!\!\quad{\bf{\hat h}}_{{\bf{s}},\tilde n}^{{\rm{SFT}}},\quad l = 1}\\
							{{\bf{\hat h}}_{{\bf{s}},\tilde n}^{{\rm{SFT}}} - \sum\limits_{l' = 1}^{l - 1} {{{\bf{h}}_{\tilde n}}({{{\bf{\hat \Theta }}}_{{\bf s},l}})}  ,l = 2,3, \cdots ,\hat L({\bf{s}}),}
					\end{array}} \right.$}
			\end{equation}
			and the aim of the proposed SAGE-based method is to find a parameter set ${{{\bf{\hat \Theta }}}_{{\bf{s}},l}}$ satisfying 
			\begin{equation}\label{pro20}
				\scalebox{0.94}{${{{\bf{\hat \Theta }}}_{{\bf{s}},l}} = \mathop {\arg \min }\limits_{{{\bf{\Theta }}_{{\bf{s}},l}}} \left\| {{\bf{\hat h}}_{{\bf{s}},\tilde n,l}^{{\rm{SFT}}} - {{\bf{h}}_{\tilde n}}({{\bf{\Theta }}_{{\bf{s}},l}})} \right\|_{\rm{F}}^2$}
			\end{equation}
			until ${\hat L({{\bf{s}}})}$ MPC parameter vectors are estimated.

			
			Under the uncorrelated scattering assumption, by reformulating the original problem in (\ref{pro1}) into the equivalent problem (\ref{pro20}), the latter can be efficiently solved using the wideband channel parameter estimation algorithm proposed in \cite{10179246}. Specifically, when the SFT channel is expressed in the form of (\ref{hsft1}), it can be demonstrated that the parameters $\alpha _{{\bf{s}},l}^{},{\beta _{{\bf{s}},l}},{\theta _{{\bf{s}},l}},{\tau _{{\bf{s}},l}}$ constitute mutually admissible data \cite{753729,sage11}. This property enables the utilization of the estimation framework in \cite{sage11} to obtain an efficient solution to (\ref{pro20}).

			Using the estimated MPC parameters, the ${\hat L({{\bf{s}}})}$ sets of MPC parameters are partitioned into $P$ groups, with each group corresponding to a distinct cluster.  With ${\gamma _{{\bf{s}},l}}{\rm{ = }}{\left| {{\alpha _{{\bf{s}},l}}} \right|^{\rm{2}}}$ as the power for the $l$-th MPC, the Kernel-Power-Density (KPD) based algorithm \cite{8013075, 8890833} can then be employed to cluster the MPCs according to their similarity. The value of $P$ can be determined by minimizing the Davies-Bouldin (DB) index \cite{8013075}. Specifically, using the MPC samples in each cluster, the functions defined in (\ref{St})-(\ref{Gaussian}) are applied to obtain a smoothed and physically meaningful PCF per cluster. This statistical averaging and model-based fitting process effectively filters out uncorrelated noise and ensures that the extracted PCFs capture the stable, large-scale scattering characteristics rather than instantaneous fluctuations. By moving a UT across different grids and collecting the corresponding PCFs, a channel charting with PCFs can be constructed to characterize the physical properties of the channels between the BS and the UTs within the whole cell coverage. When a new mobile UT enters the coverage area, the BS can retrieve the relevant PCF from the cell-specific channel charting based on the UT location and generate the required sCSI by incorporating the UT’s antenna configuration and mobility state. As the PCF-to-sCSI transformation operates on the smoothed PCFs, the resulting sCSI is inherently robust to measurement imperfections. This enables environment-aware communication without the need for online channel probing. The next section provides the detailed procedure for transforming a PCF into UT-specific sCSI. 

					\section{Statistical CSI Acquisition with PCFs}\label{section4}

					In this section, we propose a method for acquiring beam domain sCSI from PCFs. We first introduce a triple BBCM and establish its relationship with the GBSM. Then we derive the expression for beam-domain sCSI and develop a low-complexity algorithm to obtain it from the PCF.

					\begin{figure}[t!]
						\centering
						\includegraphics[width =180pt]{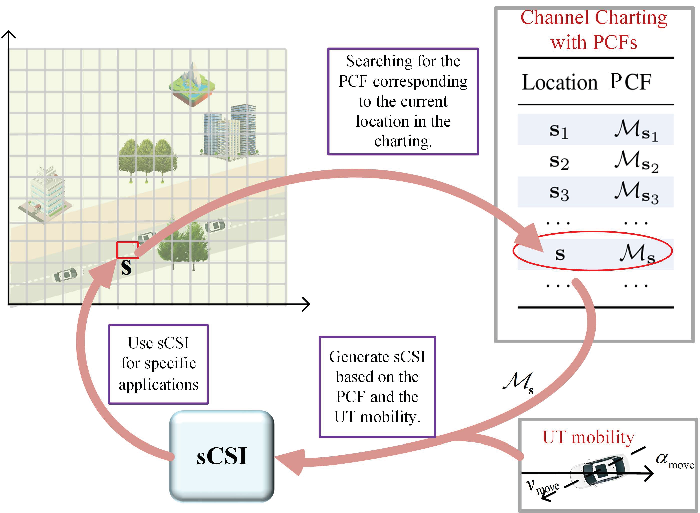}
						\caption{The procedure of sCSI acquisition with PCFs.}
						\label{frame1}
					\end{figure}

					\subsection{BBCM}

					We first introduce the concept of BBCM. For simplicity, we consider the case where the UT is equipped with a single antenna, i.e., $B=1$, and the BS is configured with an  $A \times 1$ ULA, with the element spacing of half a wavelength, i.e., ${d_{{\rm{BS}}}} = 0.5{\lambda}_{\rm c} $. However, if more complex antenna configuration is considered at both the BS and UT sides, the channel model can be analyzed in a similar method.

					Assume that the UT moves within the grid ${\bf s}$ over a period of time, with the speed ${v_{{\rm{move}}}}$ towards the direction ${{\alpha}_{{\rm{move}}}}$. According to (\ref{CTF1}),  the CTF between the UT and the $a$-th antenna element at the BS on the $k$-th subcarrier of the $n$-th symbol can be rewritten as   
					\begin{align}\label{CTF5}
						{h_a}\left( {n{T_{{\rm{sym}}}},k\Delta f} \right) &= \sum\limits_{p = 1}^{P({\bf{s}})} {\sum\limits_{q = 1}^{{L_p}({\bf{s}})} {\alpha _{{\bf{s}},{l_{p,q}}}^{}{e^{\bar \jmath 2\pi {\nu _{{\bf{s}},{l_{p,q}}}}n{T_{{\rm{sym}}}}}}} }  \notag\\
						&\!\!\!\!\!\!{{e^{ - \bar \jmath 2\pi \left( {{f_{\rm{c}}} + k\Delta f} \right)(a-1)\frac{{{d_{{\rm{BS}}}}}}{c}{\vartheta _{{\bf{s}},{l_{p,q}}}}}}}{e^{-\bar \jmath 2\pi k\Delta f{\tau _{{\bf{s}},{l_{p,q}}}}}},
					\end{align}
					where the UT antenna index $b$ is omitted due to its single-antenna setting, the parameter ${\vartheta _{{\bf{s}},{l_{p,q}}}} = \cos {\theta _{{\bf{s}},{l_{p,q}}}}$ represents the angular cosine, and the Doppler shift is $\smash{{\nu _{{\bf{s}},{l_{p,q}}}} = {v_{{\rm{move}}}}\cos \left( {{\alpha _{{\rm{move}}}} - {\beta _{{\bf{s}},{l_{p,q}}}}} \right)/{\lambda _c}}$. We continue to adopt the frame structure illustrated in Fig.~\ref{frame11}. Then  for the $\tilde n$-th slot, the SFT domain channel between the BS and the UT in the current frame can be expressed as an $A \times K \times {N_{\rm{p}}}$ dimensional tensor ${{\mathbfcal H}_{{\bf{s}},\tilde{n}}^{{\rm{SFT}}}}$, where ${[{\mathbfcal H}_{{\bf{s}},\tilde n}^{{\rm{SFT}}}]_{a,k,n}} = {h_a}\left( {{t_{\tilde n,n}}{N_{\rm{b}}}{T_{{\rm{sym}}}},k\Delta f} \right)$.

					From the definition of angular cosine, we have ${\vartheta _{{\bf{s}},{l_{p,q}}}} \in {\cal B}_\upvartheta ^{}$, where ${\cal B}_\upvartheta ^{} = [-1, 1)$. With ${{\cal B}_\uptau } = [0,1/\Delta f)$, the delay satisfies $\tau_{{\bf{s}},{l_{p,q}}} \le {N_{\rm{g}}}{T_{\rm{s}}} < 1/\Delta f$, which implies $\tau_{{\bf{s}},{l_{p,q}}} \in {{\cal B}_\uptau }$. The Doppler shift $\nu_{{\bf{s}},{l_{p,q}}}$ lies within the range $[-{\nu_{\max}}/2, {\nu_{\max}}/2]$, where ${\nu_{\max}} \buildrel \Delta \over = 2{v_{\rm{move}}}/{\lambda_{\rm{c}}}$. For the frame structure, when $N_{\rm{b}} < 1/(T_{\rm{sym}} \nu_{\max})$, it follows that $\nu_{\max} < 1/(N_{\rm{b}} T_{\rm{sym}})$. Consequently, the range of $\nu_{{\bf{s}},{l_{p,q}}}$ can be further constrained to ${\cal B}_\upnu ^{}$, where ${\cal B}_\upnu ^{}=[-1/(2 N_{\rm{b}} T_{\rm{sym}}), 1/(2 N_{\rm{b}} T_{\rm{sym}})]$.

					To address the challenges of handling continuous domains in practical channel modeling, we discretize ${\cal B}_\upvartheta$, ${\cal B}_\uptau$ and ${\cal B}_\upnu$ into finite grids. Let the number of quantization points in the angular cosine, delay, and Doppler domains be ${N_\upvartheta}$, ${N_\uptau}$, and ${N_\upnu}$, respectively. The quantization points are defined as: ${{\vartheta} _{{n}}}{\rm{ }} \buildrel \Delta \over = \frac{{{n} -1- {N_\upvartheta }/2}}{{0.5{N_\upvartheta }}}$, ${{\tau} _n} \buildrel \Delta \over = \frac{n-1}{{{N_\uptau }\Delta f}}$, ${{\nu} _n} \buildrel \Delta \over = \frac{{\left( {n -1- {N_\upnu }/2} \right)}}{{{N_\upnu }{N_{\rm b}}{T_{{\rm{sym}}}}}}$. By collecting these quantization points into distinct sets, i.e., ${{\cal S}_\upvartheta } $ $\buildrel \Delta \over = $ $\left\{ {{\vartheta _1},} \right.$${\vartheta _2},$$\cdots,$$\left. {{\vartheta _{{N_\upvartheta } }}} \right\}$, ${{\cal S}_\uptau } $ $\buildrel \Delta \over = $ $\left\{ {{\tau _1},} \right.$${\tau _2},$$\cdots,$$\left. {{\tau _{{N_\uptau }}}} \right\}$ and ${{\cal S}_\upnu } $ $\buildrel \Delta \over = $ $\left\{ {{\nu _1},} \right.$${\nu _2},$$\cdots,$$\left. {{\nu _{{N_\upnu }}}} \right\}$, when ${N_\upvartheta}$, ${N_\uptau}$, and ${N_\upnu}$ are sufficiently large, ${\cal S}_\upvartheta$, ${\cal S}_\uptau$, and ${\cal S}_\upnu$ can approximate the continuous ${\cal B}_\upvartheta ^{}$, ${\cal B}_\uptau$ and ${\cal B}_\upnu$, respectively. Each point in ${\cal S}_\upvartheta$, ${\cal S}_\uptau$, or ${\cal S}_\upnu$ can be interpreted as a physical beam directed toward a specific angular cosine, delay, or Doppler shift. As stated in \cite{mine}, the discrete angle domain ${\cal S}_\upvartheta$, the discrete delay domain ${\cal S}_\uptau$, and the discrete Doppler domain ${\cal S}_\upnu$ are referred to as the spatial beam domain, frequency beam domain, and time beam domain, respectively. Collectively, these three domains form the triple-beam (TB) domain.
					
					To analyze the channel characteristics in the TB domain, we divide the ranges ${\cal B}_\upvartheta ^{}$, ${\cal B}_\uptau$ and ${\cal B}_\upnu ^{}$ into ${\cal B}_\upvartheta ^{} $$=$$ \bigcup\limits_{n = 1}^{{N_\upvartheta }} {{\cal B}_\upvartheta ^n} $, ${\cal B}_\uptau ^{}$${\rm{ = }}$$\bigcup\limits_{n = 1}^{{N_\uptau }} {{\cal B}_\uptau ^n} $ and  ${{\cal B}_\upnu } $$=$$ \bigcup\limits_{n = 1}^{{N_\upnu }} {{\cal B}_\upnu ^n} $, where ${\cal B}_\upvartheta ^n{\rm{ = }}\left[ {\frac{{n - 1 - 0.5{N_\upvartheta }}}{{0.5{N_\upvartheta }}},\frac{{n - 0.5{N_\upvartheta }}}{{0.5{N_\upvartheta }}}} \right)$, ${\cal B}_\uptau ^n = \left[ {\frac{{n - 1}}{{{N_\uptau }\Delta f}},\frac{n}{{{N_\uptau }\Delta f}}} \right)$ and ${\cal B}_\upnu ^n = \left[ {\frac{{n - 1 - 0.5{N_\upnu }}}{{{N_\upnu }{N_{\rm{b}}}{N_{{\rm{sym}}}}}},\frac{{n - 0.5{N_\upnu }}}{{{N_\upnu }{N_{\rm{b}}}{N_{{\rm{sym}}}}}}} \right)$. When ${{N_\upvartheta }}$, ${{N_\uptau}}$ and ${{N_\upnu}}$ are sufficiently large, the subranges ${\cal B}_\upvartheta ^n$, ${\cal B}_\uptau ^n$ and ${\cal B}_\upnu ^n$ will also be sufficiently narrow. In this case, any $\vartheta  \in {\cal B}_\upvartheta ^n$, $\tau  \in {\cal B}_\uptau ^n$ and $\nu  \in {\cal B}_\upnu ^n$ can be approximated by $\vartheta _n = \frac{{n - 1 - 0.5{N_\upvartheta }}}{{0.5{N_\upvartheta }}}$, $\tau_n = \frac{{n - 1}}{{{N_\uptau }\Delta f}}$ and $\nu_n = \frac{{n - 1 - 0.5{N_\upnu }}}{{{N_\upnu }{N_{\rm{b}}}{N_{{\rm{sym}}}}}}$, respectively. Let ${\cal B}_{{\rm{TB}}}^{{n_\upvartheta },{n_\uptau },{n_\upnu }} = {\cal B}_\upvartheta ^{{n_\upvartheta }} \times {\cal B}_\uptau ^{{n_\uptau }} \times {\cal B}_\upnu ^{{n_\upnu }}$ and 
					\begin{equation}
						\!\!\!\!	{\alpha _{{\bf{s}},{n_\upvartheta },{n_\uptau },{n_\upnu }}} \!\!= \sum\limits_{\scriptstyle\left( {p,q} \right):\{ {\vartheta _{{\bf{s}},{l_{p,q}}}},{\tau _{{\bf{s}},{l_{p,q}}}},{\nu _{{\bf{s}},{l_{p,q}}}}\}  \in {\cal B}_{{\rm{TB}}}^{{n_\upvartheta },{n_\uptau },{n_\upnu }}\hfill\atop
							\scriptstyle1 \le p \le P({\bf{s}}),1 \le q \le {L_p}({\bf{s}})\hfill} {\alpha _{{\bf{s}},{l_{p,q}}}^{}},
					\end{equation}
					to represent the sum of the gains of all MPCs located within ${\cal B}_{{\rm{TB}}}^{{n_\upvartheta },{n_\uptau },{n_\upnu }}$ along the TB domain. Then ${\alpha _{{\bf{s}},{n_\upvartheta },{n_\uptau },{n_\upnu }}}$ serves as the channel gain corresponding to the $({n_\upvartheta },{n_\uptau },{n_\upnu })$-th beam in the TB domain. Consequently, the CTF in (\ref{CTF5}) can be expressed with ${\alpha _{{{\bf{s}}},{n_\upvartheta },{n_\uptau },{n_\upnu }}}$ as
					\begin{align}\label{CTF6}
						&{h_a}\left( {n{T_{{\rm{sym}}}},k\Delta f} \right) = \sum\limits_{{n_\upvartheta } = 1}^{{N_\upvartheta }} {\sum\limits_{{n_\uptau } = 1}^{{N_\uptau }} {\sum\limits_{{n_\upnu } = 1}^{{N_\upnu }} {{\alpha _{{{\bf{s}}},{n_\upvartheta },{n_\uptau },{n_\upnu }}}} } } \notag\\
						&\cdot {e^{\bar \jmath 2\pi {\nu _{{n_\upnu }}}n{T_{{\rm{sym}}}}}} \cdot {e^{ - \bar \jmath 2\pi \left( {{f_{\rm{c}}} + k\Delta f} \right)(a-1){\frac{{{d_{{\rm{BS}}}}}}{c}}{\vartheta _{{n_\upvartheta }}}}} \cdot {e^{-\bar \jmath 2\pi k\Delta f{\tau _{{n_\uptau }}}}}.
					\end{align}

					According to the system configuration and frame structure, we can define the following matrices ${{\bf{V}}^{\rm{f}}} \in {{\mathbb{C}}^{K \times {N_\uptau }}}$ and ${\bf{V}}_{\tilde n}^{\rm{t}} \in {{\mathbb{C}}^{{N_{\rm{p}}} \times {N_\upnu }}}$ where ${\left[ {{{\bf{V}}^{\rm{f}}}} \right]_{k,{n_\uptau }}} = {e^{ - \bar \jmath 2\pi k\Delta f{\tau _{{n_\uptau }}}}}$ and ${\left[ {{\bf{V}}_{\tilde n}^{\rm{t}}} \right]_{n,{n_\upnu }}} = {e^{\bar \jmath 2\pi {\nu _{{n_\upnu }}}{t_{{\tilde n},n}}{N_{\rm{b}}}{T_{{\rm{sym}}}}}}$, and the tensor ${{\mathbfcal V}^{\rm{s}}} \in {{\mathbb{C}}^{A \times K \times {N_\upvartheta } \times K}}$, where ${\left[ {{{\mathbfcal V}^{\rm{s}}}} \right]_{a,k,{n_\upvartheta },k}} = {{e^{ - \bar \jmath 2\pi \left( {{f_{\rm{c}}} + k\Delta f} \right)(a-1)\frac{{{d_{{\rm{BS}}}}}}{c}{\vartheta _{{n_\upvartheta }}}}}}$. From (\ref{CTF6}), ${\mathbfcal H}_{{\bf{s}},\tilde n}^{{\rm{SFT}}}$ can be further expressed in terms of ${{\mathbfcal V}^{\rm{s}}}$, ${{{\bf{V}}^{\rm{f}}}}$ and ${{\bf{V}}_{\tilde n}^{\rm{t}}}$ by
					\begin{equation}\label{tbchannel}
						{\mathbfcal H}_{{\bf{s}},\tilde n}^{{\rm{SFT}}} = {{\mathbfcal V}^{\rm{s}}}{*_2}\left( {{{\bf{V}}^{\rm{f}}}{ \times _2}\left( {{\bf{V}}_{\tilde n}^{\rm{t}}{ \times _3}{\mathbfcal H}_{\bf{s}}^{{\rm{TB}}}} \right)} \right),
					\end{equation}
					where ${\mathbfcal H}_{\bf{s}}^{{\rm{TB}}} \in {{\mathbb{C}}^{{N_\upvartheta } \times {N_\uptau } \times {N_\upnu }}}$ represents the TB domain channel tensor at the location ${\bf{s}}$ satisfying ${\left[ {{\mathbfcal H}_{\bf{s}}^{{\rm{TB}}}} \right]_{{n_\upvartheta },{n_\uptau },{n_\upnu }}} = {\alpha _{{\bf{s}},{n_\upvartheta },{n_\uptau },{n_\upnu }}}$.  It should be noted that, when the spatial-wideband effect can be ignored, ${{\mathbfcal V}^{\rm{s}}}$ reduces to ${{\bf{V}}^{\rm{s}}} \in {{\mathbb{C}}^{A \times {N_\upvartheta }}}$ with ${\left[ {{{\bf{V}}^{\rm{s}}}} \right]_{a,{n_\vartheta }}} = {e^{ - \bar \jmath 2\pi (a - 1){d_{{\rm{BS}}}}{\vartheta _{{n_\upvartheta }}}/{\lambda _{\rm{c}}}}}$, and 
					(\ref{tbchannel}) simplifies to \cite{9910031}
					\begin{equation}\label{tbchannel1}
						{\mathbfcal H}_{{\bf{s}},\tilde n}^{{\rm{SFT}}} = {{\bf{V}}^{\rm{s}}}{ \times _1}\left( {{{\bf{V}}^{\rm{f}}}{ \times _2}\left( {{\bf{V}}_{\tilde n}^{\rm{t}}{ \times _3}{\mathbfcal H}_{\bf{s}}^{{\rm{TB}}}} \right)} \right).
					\end{equation}

					To facilitate the subsequent derivation, we adopt the model in (\ref{tbchannel1}) for analysis, while a similar approach also applies to (\ref{tbchannel}). Since both (\ref{tbchannel}) and (\ref{tbchannel1}) establish the relationship between the TB domain and the SFT domain channels, we collectively refer to them as the triple BBCM. In addition to the proposed triple BBCM, 1D BBCMs \cite{10993474} and 2D BBCMs (also known as double BBCMs) \cite{10146318} can be obtained by quantizing the angle domain or the angle-delay domain, respectively. Unlike GBSMs, BBCMs are statistical models that approximate continuous physical domains using discrete beam representations. This modeling approach is increasingly favored in channel acquisition and transmission design \cite{lu2020robust}.

					\subsection{TB Domain sCSI}
					For massive MIMO-OFDM systems, due to the limited number of clusters in the propagation environment, ${{\mathbfcal H}_{\bf{s}}^{{\rm{TB}}}}$ tends to be sparse, and sparsity exists along its triple dimensions \cite{ mine}. Compared with  the SFT domain channel tensor ${\mathbfcal H}_{{\bf{s}},\tilde n}^{{\rm{SFT}}}$, the number of dominate elements in ${{\mathbfcal H}_{\bf{s}}^{{\rm{TB}}}}$ is significantly reduced. As for sCSI, 
					assuming the channel coefficients across different beams follow independent circular symmetric complex Gaussian distributions with zero mean and varying variances, we can define the following tensor \cite{7332961}
					\begin{equation}\label{Wu}
						{\mathbfcal W}_{\bf{s}} = {\mathbb{E}}\left\{ {{\mathbfcal H}_{\bf{s}}^{{\rm{TB}}} \odot {{\left( {{\mathbfcal H}_{\bf{s}}^{{\rm{TB}}}} \right)}^*}} \right\}. 
					\end{equation}
					Then, ${\mathbfcal W}_{\bf{s}}$ is the TB domain power distribution tensor, and the $\left( {{n_\upvartheta },{n_\uptau },{n_\upnu }} \right)$-th element in ${\mathbfcal W}_{\bf{s}}$ corresponds to the power of the $\left( {{n_\upvartheta },{n_\uptau },{n_\upnu }} \right)$-th beam in the TB domain. Define the SFT domain channel covariance matrix as ${\bf{R}}_{{\bf{s}},\tilde n}^{{\rm{SFT}}} = {\mathbb{E}}\left\{ {{\rm{vec}}({\mathbfcal{H}}_{{\bf{s}},\tilde n}^{{\rm{SFT}}}){{\left( {{\rm{vec}}({\mathbfcal{H}}_{{\bf{s}},\tilde n}^{{\rm{SFT}}})} \right)}^{\rm{H}}}} \right\}$. With the channel model in (\ref{tbchannel1}), ${\bf{R}}_{{\bf{s}},\tilde n}^{{\rm{SFT}}}$ can also be expressed with ${\mathbfcal W}_{\bf{s}}$ as \cite{mine,10146318}
					\begin{equation}\label{Rsft1}
						{\bf{R}}_{{\bf{s}},\tilde n}^{{\rm{SFT}}} = {\bf{V}}{\rm{Diag}}({\rm{vec}}({{\mathbfcal W}_{\bf{s}}})){{\bf{V}}^{\rm{H}}},
					\end{equation}
					where ${\bf{V}} \buildrel \Delta \over = {\bf{V}}_{\tilde n}^{\rm{t}} \otimes {{\bf{V}}^{\rm{f}}} \otimes {{\bf{V}}^{\rm{s}}}$.

					Given its sparsity and slow time-varying characteristics, ${\mathbfcal W}_{\bf{s}}$ can be utilized in various applications, including pilot design \cite{mine}, channel estimation \cite{9910031}, and precoding  \cite{xie}. By definition, ${\mathbfcal W}_{\bf{s}}$ is related to the propagation environment between the BS and ${\bf{s}}$, as well as the mobility of the UT. This enables the generation of ${\mathbfcal W}_{\bf{s}}$ using appropriate methods when the PCF at ${\bf{s}}$ and the UT mobility are given, as shown in Fig.~\ref{frame1}. To this end, we aim to propose a method to acquire ${\mathbfcal W}_{\bf{s}}$ directly from the PCF, thereby eliminating the probing overhead associated with traditional methods for acquiring ${\mathbfcal W}_{\bf{s}}$. 
				
					\begin{remark}
						\emph{In existing literature, both the spatial beam domain energy distribution vector (1D beam domain sCSI) and the spatial-frequency (SF) beam domain energy distribution matrix (SF beam domain sCSI) can serve as CFs \cite{10287775, 10146318, 10292876, 11247875}. However, ${\mathbfcal W}_{\bf{s}}$ cannot directly act as a CF of ${\bf{s}}$, as its distribution along the time beam domain (i.e., discrete Doppler domain) depends not only on the location but also on the UT mobility.}
					\end{remark}

					\subsection{Acquiring ${\mathbfcal W}_{\bf{s}}^{}$ from the PCF}
					For any ${\bf{s}} \in {\cal S}$, once the PCF ${{\cal M}_{\bf{s}}}$ is available, the AAD domain power distribution ${S_{{{\bf{s}}}}}\left({{\bf{\Theta }}^\alpha } \right)$ can then be reconstruncted using (\ref{St}). Since the SFT domain channel ${\mathbfcal H}_{{\bf{s}},\tilde n}^{{\rm{SFT}}}$ is strongly influenced by the MPC parameters, as shown in (\ref{CTF5}), the corresponding covariance matrix ${\bf{R}}_{{\bf{s}},\tilde n}^{{\rm{SFT}}}$ is also closely related to the statistics of the MPC parameters, i.e., $S_{\bf{s}}\left( {\bf{\Theta }}^\alpha \right)$. According to \cite{you2015pilot}, we adopt the common uncorrelated scattering approximation where channel gains at different directions are statistically uncorrelated. Thus, for parameter vectors ${{\bf{\Theta }}^\alpha } = {[\beta ,\theta ,\tau ]^{\rm{T}}}$ and ${{{\bf{\bar \Theta }}}^\alpha } = {[\bar \beta ,\bar \theta ,\bar \tau ]^{\rm{T}}}$, ${\mathbb{E}}$$\left\{ {{g_{\bf{s}}}\left( {{{\bf{\Theta }}^\alpha }} \right)g_{\bf{s}}^*\left( {{\bar{\bf{\Theta }}^\alpha }} \right)} \right\}$ $ = $ ${S_{\bf{s}}}\left( {{{\bf{\Theta }}^\alpha }} \right)$ $\delta \left( {\beta  - \bar \beta } \right)$ $\delta \left( {\theta  - \bar \theta } \right)$ $\delta \left( {\tau  - \bar \tau } \right)$ holds. Consequencely, the covariance matrix ${\bf{R}}_{{\bf{s}},\tilde n}^{{\rm{SFT}}}$ can be completely charaterized by ${S_{\bf{s}}}\left( {{{\bf{\Theta }}^\alpha }} \right)$ as
					\begin{align}\label{Rsft2}
						\!{\bf{R}}_{{\bf{s}},\tilde n}^{{\rm{SFT}}}\!\! = \iiint &{{S_{\bf{s}}}\left( {{{\bf{\Theta }}^\alpha }} \right){\bf{v}}_{\tilde n}^{{\rm{sft}}}(\beta ,\theta ,\tau )}{\left[ {\bf{v}}_{\tilde n}^{{\rm{sft}}}(\beta ,\theta ,\tau ) \right]^{\rm{H}}}d\theta d\beta d\tau.
					\end{align}

					

					Equations (\ref{Rsft1}) and (\ref{Rsft2}) establish the relationships between ${\bf{R}}_{{\bf{s}},\tilde n}^{{\rm{SFT}}}$ and the beam domain sCSI, as well as between ${\bf{R}}_{{\bf{s}},\tilde n}^{{\rm{SFT}}}$ and ${S_{\bf{s}}\left( {\bf{\Theta }}^\alpha \right)}$, respectively. Leveraging these relationships, the process of generating sCSI from the PCF can be carried out in the following steps: first, construct ${S_{\bf{s}}\left( {\bf{\Theta }}^\alpha \right)}$ based on the PCF ${{\cal M}_{\bf{s}}}$; then, derive the corresponding covariance matrix ${\bf{R}}_{{\bf{s}},\tilde n}^{{\rm{SFT}}}$ with ${S_{\bf{s}}\left( {\bf{\Theta }}^\alpha \right)}$ by (\ref{Rsft2}). Finally, the problem of obtaining ${{\mathbfcal W}_{\bf{s}}}$ is reformulated as an optimization task, where the objective is to find a non-negative real-valued tensor ${\mathbfcal W}$ such that
					\begin{equation}\label{condition1}
						{\bf{R}}_{{\bf{s}},\tilde n}^{{\rm{SFT}}} = {\bf{V}}{\rm{Diag}}({\rm{vec}}({\mathbfcal W})){{\bf{V}}^{\rm{H}}}. 
					\end{equation}
					This problem is essentially a sparse signal recovery problem. However, commonly used sparse signal reconstruction algorithms are typically designed for recovering complex-valued signals \cite{10146318}, and the condition of non-negative real values is challenging to satisfy. To address this, we propose a novel method for solving ${\mathbfcal W}$.


					
					Due to its non-negativity, the power distribution tensor can be written as the element-wise product of two real tensors, i.e., ${\mathbfcal W} = {\mathbfcal Q} \odot {\mathbfcal Q}$, where ${\mathbfcal Q} \in {{\mathbb{R}}^{{N_\upvartheta } \times {N_\uptau } \times {N_\upnu }}}$. Thus, the problem of obtaining ${\mathbfcal W}$ from the PCF can be transformed into obtaining the real tensor ${\mathbfcal Q}$, thereby avoiding the non-negativity constraint of ${\mathbfcal W}$. On the other hand, when ${\bf{R}}_{{\bf{s}},\tilde n}^{{\rm{SFT}}}$ is generated through the PCF, a natural idea to transform it into the TB domain is operating ${{\rm{diag}}\left\{ {{{\bf{V}}^{\rm{H}}}{\bf{R}}_{{\bf{s}},\tilde n}^{{\rm{SFT}}}{\bf{V}}} \right\}}$ \cite{10146318}. Therefore, onces (\ref{condition1}) holds, ${\mathbfcal Q}$ should also fulfill the following condition:
					\begin{equation}\label{lr}
						\scalebox{0.95}{$	\! {{\rm{diag}}\left\{ {{{\bf{V}}^{\rm{H}}}{\bf{V}}{\rm{Diag}}\left( {{\rm{vec}}({\mathbfcal Q} \odot {\mathbfcal Q})} \right){{\bf{V}}^{\rm{H}}}{\bf{V}}} \right\}} = {{\rm{diag}}\left\{ {{{\bf{V}}^{\rm{H}}}{\bf{R}}_{{\bf{s}},\tilde n}^{{\rm{SFT}}}{\bf{V}}} \right\}},$}
					\end{equation}
					where both sides are real vectors of size ${N_\upvartheta }{N_\uptau }{N_\upnu } \times 1$.

					\begin{figure*}[t!]
						\begin{align}\label{LQ}
							L\left( {\mathbfcal Q} \right) = &\scalebox{1}{$\sum\limits_{{n_\upvartheta } = 1}^{{N_\upvartheta } } {\sum\limits_{{n_\uptau } = 1}^{{N_\uptau } } {\sum\limits_{{n_\upnu } = 1}^{{N_\upnu } } {{{[\bar {\mathbfcal A}]}_{{n_\upvartheta },{n_\uptau },{n_\upnu }}}\log \frac{{{{[\bar {\mathbfcal A}]}_{{n_\upvartheta },{n_\uptau },{n_\upnu }}}}}{{{{[{\mathbfcal A}\left( {\mathbfcal Q} \right)]}_{{n_\upvartheta },{n_\uptau },{n_\upnu }}}}}} } }+ \sum\limits_{{n_\upvartheta } = 1}^{{N_\upvartheta } } {\sum\limits_{{n_\uptau } = 1}^{{N_\uptau } } {\sum\limits_{{n_\upnu } = 1}^{{N_\upnu } } {{{[{\mathbfcal A}\left( {\mathbfcal Q} \right)]}_{{n_\upvartheta },{n_\uptau },{n_\upnu }}}} } } -\sum\limits_{{n_\upvartheta } = 1}^{{N_\upvartheta } } {\sum\limits_{{n_\uptau } = 1}^{{N_\uptau } } {\sum\limits_{{n_\upnu } = 1}^{{N_\upnu } } {{{[\bar {\mathbfcal A}]}_{{n_\upvartheta },{n_\uptau },{n_\upnu }}}} } }.$}	
						\end{align}
						\hrulefill
					\end{figure*}	
					
					To further simplify the computation on both sides of (\ref{lr}), we reshape them into ${{N_\upvartheta} \times {N_\uptau} \times {N_\upnu}}$ tensors, denoted as ${\mathbfcal A}\left( {\mathbfcal Q} \right)$ and ${\bar {\mathbfcal A}}$, respectively. To quantify the similarity between ${\mathbfcal A}\left( {\mathbfcal Q} \right)$ and ${\bar {\mathbfcal A}}$, we adopt the Kullback-Leibler (KL) divergence as the metric \cite{9910031}. According to its definition, the KL divergence between ${\mathbfcal A}\left( {\mathbfcal Q} \right)$ and ${\bar {\mathbfcal A}}$ is given by (\ref{LQ}) at the top of the next page. A smaller KL divergence indicates a closer match between ${\mathbfcal A}\left( {\mathbfcal Q} \right)$ and ${\bar {\mathbfcal A}}$. Based on this, the problem can be formulated as \begin{equation}\label{finalp} 
						{{\mathbfcal Q}^*} = \mathop {\arg \min }\limits_{\mathbfcal Q} L\left( {\mathbfcal Q} \right). 
					\end{equation}

					In practice, during the algorithm execution, ${\mathbfcal A}\left( {\mathbfcal Q} \right)$ can also be directly computed in the following manner:
					\begin{equation}\label{AQ}
						{\mathbfcal A}\left( {\mathbfcal Q} \right) = {{\bf{T}}^{\rm{s}}}{ \times _1}\left( {{{\bf{T}}^{\rm{f}}}{ \times _2}\left( {{{\bf{T}}^{\rm{t}}}{ \times _3}({\mathbfcal Q} \odot {\mathbfcal Q})} \right)} \right),
					\end{equation}
					where 
					\begin{subequations}\label{T1}
						\begin{align}
							&\scalebox{0.99}{${{\bf{T}}^{\rm{s}}} = \left( {{{({{\bf{V}}^{\rm{s}}})}^{\rm{H}}}{{\bf{V}}^{\rm{s}}}} \right) \odot {\left( {{{({{\bf{V}}^{\rm{s}}})}^{\rm{H}}}{{\bf{V}}^{\rm{s}}}} \right)^*}$},\label{T11}\\
							&\scalebox{0.99}{${{\bf{T}}^{\rm{f}}} = \left( {{{({{\bf{V}}^{\rm{f}}})}^{\rm{H}}}{{\bf{V}}^{\rm{f}}}{\rm{ }}} \right) \odot {\left( {{{({{\bf{V}}^{\rm{f}}})}^{\rm{H}}}{{\bf{V}}^{\rm{f}}}{\rm{ }}} \right)^*}{\rm{ }}$},\label{T12}\\
							&\scalebox{0.99}{${{\bf{T}}^{\rm{t}}} = \left( {{{({\bf{V}}_{\tilde n}^{\rm{t}})}^{\rm{H}}}{\bf{V}}_{\tilde n}^{\rm{t}}} \right) \odot {\left( {{{({\bf{V}}_{\tilde n}^{\rm{t}})}^{\rm{H}}}{\bf{V}}_{\tilde n}^{\rm{t}}} \right)^*}$}.\label{T13}
						\end{align}
					\end{subequations}
					By executing (\ref{AQ})-(\ref{T1}) to calculate ${\mathbfcal A}\left( {\mathbfcal Q} \right)$, we bypass the need for diagonalization and reshaping operations.

					For the optimization problem in (\ref{finalp}), we first present the following theorem to facilitate the subsequent solution.
					
					\begin{theorem}\label{theorem1}
						The gradient of $L\left( {\mathbfcal Q} \right)$ is given by
						\begin{equation}\label{result}
							\scalebox{1.1}{$\frac{{\partial L\left( {\mathbfcal Q} \right)}}{{\partial {\mathbfcal Q}}}$} = 2{\mathbfcal Q} \odot \left({{\bf{T}}^{\rm{s}}}{ \times _1}\left( {{{\bf{T}}^{\rm{f}}}{ \times _2}\left( {{{\bf{T}}^{\rm{t}}}{ \times _3}{\mathbfcal C}} \right)} \right)\right),
						\end{equation}
						where ${\mathbfcal C} \in {{\mathbb{R}}^{{N_\upvartheta } \times {N_\uptau } \times {N_\upnu }}}$ satisfying
						\begin{equation}
							{\left[ {\mathbfcal C} \right]_{{n_\upvartheta },{n_\uptau },{n_\upnu }}} = 1 - \scalebox{1.1}{$\frac{{{{[\bar {\mathbfcal A}]}_{{n_\upvartheta },{n_\uptau },{n_\upnu }}}}}{{{{[{\mathbfcal A}\left( {\mathbfcal Q} \right)]}_{{n_\upvartheta },{n_\uptau },{n_\upnu }}}}}$}.
						\end{equation}
					\end{theorem}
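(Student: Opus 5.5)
We compute the gradient elementwise by the chain rule, exploiting that ${\mathbfcal A}({\mathbfcal Q})$ is a linear map applied to ${\mathbfcal W}={\mathbfcal Q}\odot{\mathbfcal Q}$, as made explicit in (\ref{AQ}).

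\emph{Step 1 (entrywise form of ${\mathbfcal A}$).} By the definition of the $m$-mode product (\ref{mmode}), (\ref{AQ}) reads
\begin{equation*}
[{\mathbfcal A}({\mathbfcal Q})]_{i,j,k}=\sum_{i',j',k'}[{\bf T}^{\rm s}]_{i,i'}[{\bf T}^{\rm f}]_{j,j'}[{\bf T}^{\rm t}]_{k,k'}[{\mathbfcal Q}]_{i',j',k'}^2 .
\end{equation*}
Hence
\begin{equation*}
\partial[{\mathbfcal A}]_{i,j,k}/\partial[{\mathbfcal Q}]_{u,v,w}=2[{\bf T}^{\rm s}]_{i,u}[{\bf T}^{\rm f}]_{j,v}[{\bf T}^{\rm t}]_{k,w}[{\mathbfcal Q}]_{u,v,w}.
\end{equation*}

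\emph{Step 2 (derivative of the KL objective).} In (\ref{LQ}) the last sum does not depend on ${\mathbfcal Q}$. The first two sums contribute $-[\bar{\mathbfcal A}]_{i,j,k}/[{\mathbfcal A}]_{i,j,k}+1=[{\mathbfcal C}]_{i,j,k}$ per entry, so
\begin{equation*}
\partial L/\partial[{\mathbfcal A}]_{i,j,k}=[{\mathbfcal C}]_{i,j,k}.
\end{equation*}
This requires $[{\mathbfcal A}({\mathbfcal Q})]_{i,j,k}>0$ wherever $[\bar{\mathbfcal A}]_{i,j,k}>0$, which we assume so that $L$ is finite.

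\emph{Step 3 (chain rule and reassembly).} Combining the two steps gives
\begin{equation*}
\partial L/\partial[{\mathbfcal Q}]_{u,v,w}=2[{\mathbfcal Q}]_{u,v,w}\sum_{i,j,k}[{\bf T}^{\rm s}]_{i,u}[{\bf T}^{\rm f}]_{j,v}[{\bf T}^{\rm t}]_{k,w}[{\mathbfcal C}]_{i,j,k}.
\end{equation*}
The inner sum is naturally $({\bf T}^{\rm s})^{\rm T}\times_1(({\bf T}^{\rm f})^{\rm T}\times_2(({\bf T}^{\rm t})^{\rm T}\times_3{\mathbfcal C}))$. Since the sum is factorized across modes, the order of the mode products is immaterial.

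\emph{Main obstacle: removing the transposes.} To reach (\ref{result}) we must show each ${\bf T}$ in (\ref{T1}) is symmetric. Put ${\bf G}=({\bf V}^{\rm s})^{\rm H}{\bf V}^{\rm s}$. It is Hermitian, so ${\bf G}^*={\bf G}^{\rm T}$, and therefore
\begin{equation*}
{\bf T}^{\rm s}={\bf G}\odot{\bf G}^*
\end{equation*}
has entries $|[{\bf G}]_{i,j}|^2$. These are real, and symmetric because $|[{\bf G}]_{i,j}|=|[{\bf G}]_{j,i}|$. The same argument applies verbatim to ${\bf T}^{\rm f}$ and ${\bf T}^{\rm t}$. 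With the transposes dropped, the expression becomes $2{\mathbfcal Q}\odot({\bf T}^{\rm s}\times_1({\bf T}^{\rm f}\times_2({\bf T}^{\rm t}\times_3{\mathbfcal C})))$, which is (\ref{result}).

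Because every ${\bf T}$ is real and ${\mathbfcal C}$ is real, the gradient is real-valued, consistent with ${\mathbfcal Q}\in{\mathbb R}^{N_\upvartheta\times N_\uptau\times N_\upnu}$. The rest of the argument is routine index bookkeeping.
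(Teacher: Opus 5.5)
Your proposal is correct and follows essentially the same route as the paper's Appendix~A: expand $[\mathbfcal{A}(\mathbfcal{Q})]$ entrywise, differentiate the KL objective to get the factor $1-[\bar{\mathbfcal A}]/[\mathbfcal A(\mathbfcal Q)]$, apply the chain rule, and use the symmetry of ${\bf T}^{\rm s},{\bf T}^{\rm f},{\bf T}^{\rm t}$ to drop the transposes. Your explicit check that each ${\bf T}$ is real and symmetric, because its entries are $|[{\bf G}]_{i,j}|^2$ for a Hermitian Gram matrix ${\bf G}$, fills in a step the paper only asserts.
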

					\emph{Proof}: See Appendix \ref{app1}.
					
					
					With the gradient of $L\left( {\mathbfcal Q} \right)$ given in (\ref{result}), we can employ the gradient descent method to update ${\mathbfcal Q}$ in the $d$th iteration as follows:
					\begin{equation}\label{iter}
						{{\mathbfcal Q}^{(d + 1)}} = {{\mathbfcal Q}^{(d)}} - \scalebox{1.1}{${\delta ^{(d)}}\frac{{\partial L\left( {{{\mathbfcal Q}^{(d)}}} \right)}}{{\partial {{\mathbfcal Q}^{(d)}}}}$}
					\end{equation}
					where ${\delta ^{(d)}}$ is the step size. The detailed sCSI acquisition method is presented in Algorithm \ref{SCFgen}. It can be observed that the proposed Algorithm \ref{SCFgen} relies solely on tensor product operations, thereby avoiding computationally intensive procedures such as matrix inversion.

					\begin{algorithm}[t!]
						\caption{sCSI Acquisition}
						\label{SCFgen}
						\begin{algorithmic}[1]
							\footnotesize
							\Require The PCF ${{\cal M}_{\bf{s}}}$, the beam matrices ${{{\bf{V}}^{\rm{s}}}}$, ${{{\bf{V}}^{\rm{f}}}}$ and ${{\bf{V}}_{\tilde n}^{\rm{t}}}$, the mobility of the UT; 
							\Ensure The sCSI ${{\mathbfcal W}_{\bf{s}}}$;
							\State Generating ${\bf{R}}_{{\bf{s}},\tilde n}^{{\rm{SFT}}}$ with the PCF;
							\State Select the maximum iteration number $d_{\rm {max}}$,  the minimum step size ${\delta _{\min }}$, and the parameter $\rho  \in (0,1]$ to adjust the step size;
							\State  Initialization: $d = 0$,  a suitable value for ${\delta^{(0)}}$, ${[{{\mathbfcal Q}^{(0)}}]_{{n_\upvartheta },{n_\uptau },{n_\upnu }}} = \sqrt {{{\left[ {\bar {\mathbfcal A}/({N_\upvartheta }{N_\uptau }{N_\upnu })} \right]}_{{n_\upvartheta },{n_\uptau },{n_\upnu }}}} $ for all the elements in ${{\mathbfcal Q}^{(0)}}$;
							\Repeat
							\State Calculate $\frac{{\partial L\left( {\mathbfcal Q} \right)}}{{\partial {\mathbfcal Q}}}$ by (\ref{result});
							\State Update ${{\mathbfcal Q}^{(d + 1)}}$ by (\ref{iter});
							\If{$L\left( {{{\mathbfcal Q}^{(d + 1)}}} \right) > L\left( {{{\mathbfcal Q}^{(d)}}} \right)$}
							\State ${\delta ^{(d)}} = \rho {\delta ^{(d)}}$, ${{\mathbfcal Q}^{(d + 1)}} = {{\mathbfcal Q}^{(d)}}$;
							\Else  
							\State break;
							\EndIf
							\State $d = d + 1$;
							\Until{$d = {d_{\max }}$ or ${\delta ^{(d)}} \le {\delta _{\min }}$ }
							\State \Return ${{\mathbfcal W}_{\bf{s}}} = {{\mathbfcal Q}^{(d)}} \odot {{\mathbfcal Q}^{(d)}}$.
						\end{algorithmic}
					\end{algorithm}

					From the definitions in Section~\ref{section4}-A, the beam matrices ${\bf V}^{\rm{s}}$, ${\bf V}^{\rm{f}}$, and ${\bf V}^{\rm{t}}_{\tilde n}$ can all be derived from the DFT matrix or its partial variants through elementary transformations. Specifically, ${{\bf{V}}^{\rm{s}}} = {\left[ {{{\bf{F}}_{{N_\upvartheta },{N_\upvartheta }/2}}} \right]_{1:A,:}}$, ${{\bf{V}}^{\rm{f}}} = {\left[ {{{\bf{F}}_{{N_\uptau }}}} \right]_{{\cal K},:}}$, and ${\bf{V}}_{\tilde n}^{{\rm{t}}} = {[{{\bf{\Gamma }}_{{N_\upnu },{{t_{\tilde n,1}}}}}{\bf{F}}_{{N_\upnu },{N_\upnu }/2}^*]_{{1}:N_{\rm p},:}}$. Leveraging these structured forms of the beam matrices, the implementation of Algorithm~\ref{SCFgen} can be further simplified. Based on this observation, we now present Theorem 2.

					\begin{theorem}\label{theorem2}
						The matrices ${{{\bf{T}}^{\rm{s}}}}$, ${{{\bf{T}}^{\rm{f}}}}$ and ${{{\bf{T}}^{\rm{t}}}}$ are all circulant matrices, and 
						\begin{subequations}\label{T2}
							\begin{align}
								&\scalebox{0.99}{${{\bf{T}}^{\rm{s}}} = {\bf{F}}_{{N_\upvartheta }}^{\rm{H}}{\rm{Diag}}\{ {{\bf{t}}^{\rm{s}}}\} {{\bf{F}}_{{N_\upvartheta }}}$},\label{T21}\\
								&\scalebox{0.99}{${{\bf{T}}^{\rm{f}}} = {\bf{F}}_{{N_\uptau }}^{\rm{H}}{\rm{Diag}}\{ {{\bf{t}}^{\rm{f}}}\} {{\bf{F}}_{{N_\uptau }}}$},\label{T22}\\
								&\scalebox{0.99}{${{\bf{T}}^{\rm{t}}} = {\bf{F}}_{{N_\upnu }}^{\rm{H}}{\rm{Diag}}\{ {{\bf{t}}^{\rm{t}}}\} {{\bf{F}}_{{N_\upnu }}}$},\label{T23}
							\end{align}
						\end{subequations}
						where
						\begin{subequations}\label{T4}
							\begin{align}
								&\scalebox{0.99}{${{\bf{t}}^{\rm{s}}} = \frac{1}{{{N_\upvartheta }}}{{\bf{F}}_{{N_\upvartheta }}}\left[ {\left( {{\bf{F}}_{{N_\upvartheta }}^{\rm{H}}{{\bf{p}}^{\rm{s}}}} \right) \odot {{\left( {{\bf{F}}_{{N_\upvartheta }}^{\rm{H}}{{\bf{p}}^{\rm{s}}}} \right)}^*}} \right]$},\label{T31}\\
								&\scalebox{0.99}{${{\bf{t}}^{\rm{f}}} = \frac{1}{{{N_\uptau }}}{{\bf{F}}_{{N_\uptau }}}\left[ {\left( {{\bf{F}}_{{N_\uptau }}^{\rm{H}}{{\bf{p}}^{\rm{f}}}} \right) \odot {{\left( {{\bf{F}}_{{N_\uptau }}^{\rm{H}}{{\bf{p}}^{\rm{f}}}} \right)}^*}} \right]$},\label{T32}\\
								&\scalebox{0.99}{${{\bf{t}}^{\rm{t}}} = \frac{1}{{{N_\upnu }}}{{\bf{F}}_{{N_\upnu }}}\left[ {\left( {{\bf{F}}_{{N_\upnu }}^{\rm{H}}{{\bf{p}}^{\rm{t}}}} \right) \odot {{\left( {{\bf{F}}_{{N_\upnu }}^{\rm{H}}{{\bf{p}}^{\rm{t}}}} \right)}^*}} \right]$},\label{T33}
							\end{align}
						\end{subequations}
						and \scalebox{0.96}{${{\bf{p}}^{\rm{s}}} = {[{{\bf{1}}_{1 \times A}},{{\bf{0}}_{1 \times ({N_\upvartheta } - A)}}]^{\rm{T}}}$}, \scalebox{0.96}{${{\bf{p}}^{\rm{f}}} = [{{\bf{0}}_{1 \times ({k_0} - 1)}},{{\bf{1}}_{1 \times K}},$}
						\scalebox{0.92}{${{\bf{0}}_{1 \times ({N_{{\rm{\uptau}}}} - K - {k_0} + 1)}}{]^{\rm{T}}}$},\scalebox{0.96}{${{\bf{p}}^{\rm{t}}} =$}\scalebox{0.92}{$ {{\bf{\Gamma }}_{{N_\upnu },{t_{\tilde n,1}}}}\!\!\!\left( {{{[1,{{\bf{0}}_{1 \times ({N_{{\rm{\upnu}}}} - {N_{\rm{p}}})}},{{\bf{1}}_{1 \times ({N_{\rm{p}}} - 1)}}]}^{\rm{T}}}} \right)$}.
					\end{theorem}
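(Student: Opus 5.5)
We reduce all three statements to one generic lemma about matrices of the form $\mathbf{V}=[\mathbf{G}\mathbf{F}_N^{(\ast)}]_{\mathcal{R},:}$, where $\mathbf{G}$ is a permutation (or diagonal phase) matrix and $\mathcal{R}$ a row selection. Write such a matrix as $\mathbf{V}=\mathbf{E}^{\rm T}\mathbf{\Pi}\mathbf{F}$, with $\mathbf{E}$ a 0/1 row-selection matrix. Then
\[
\mathbf{V}^{\rm H}\mathbf{V}=\mathbf{F}^{\rm H}\,\mathbf{D}\,\mathbf{F},
\qquad
\mathbf{D}=\mathbf{\Pi}^{\rm H}\mathbf{E}\mathbf{E}^{\rm T}\mathbf{\Pi}.
\]
The matrix $\mathbf{D}$ is diagonal with 0/1 entries, equal to $\mathrm{Diag}\{\mathbf{p}\}$ for a suitable indicator vector $\mathbf{p}$. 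Hence $\mathbf{V}^{\rm H}\mathbf{V}$ is circulant, since any $\mathbf{F}^{\rm H}\mathrm{Diag}\{\cdot\}\mathbf{F}$ is.

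The first task is to identify $\mathbf{p}$ in each case.

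\textbf{Spatial case.} $\mathbf{V}^{\rm s}$ consists of the first $A$ rows of a column-shifted DFT, so $\mathbf{p}^{\rm s}$ is ones on the first $A$ entries. The column shift $N_\upvartheta/2$ is a right multiplication by a diagonal-phase or permutation factor, which will have to be tracked.

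\textbf{Frequency case.} The rows of $\mathbf{V}^{\rm f}$ are indexed by $\mathcal{K}$, giving $\mathbf{p}^{\rm f}$ with ones on $k_0,\dots,k_0+K-1$.

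\textbf{Time case.} The rows of $\mathbf{V}^{\rm t}_{\tilde n}$ are $\mathbf{\Gamma}_{N_\upnu,t_{\tilde n,1}}$ applied before truncation to the first $N_{\rm p}$ rows. The indicator is therefore the image under the cyclic shift $\mathbf{\Gamma}$ of a length-$N_{\rm p}$ window. The conjugation on $\mathbf{F}^\ast$ reverses the index order, which explains the pattern $[1,\mathbf{0},\mathbf{1}_{N_{\rm p}-1}]$ (a window wrapped around index $1$ after reflection).

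In all cases I will verify directly from the entrywise definitions, e.g. $[\mathbf{V}^{\rm f}]_{k,n}=e^{-\bar\jmath 2\pi k(n-1)/N_\uptau}$, that $[\mathbf{V}^{\rm H}\mathbf{V}]_{m,n}=c_{\langle n-m\rangle}$ with $c_j=\sum_{r\in\mathcal{R}}e^{\pm\bar\jmath2\pi rj/N}$. This gives circulance independently of any factorization conventions.

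Next, for the Hadamard product I use that $\mathbf{M}=\mathbf{V}^{\rm H}\mathbf{V}$ is circulant with first column $\mathbf{c}$. Then $\mathbf{M}^\ast$ is circulant with first column $\mathbf{c}^\ast$, and $\mathbf{T}=\mathbf{M}\odot\mathbf{M}^\ast$ is circulant with first column $\mathbf{c}\odot\mathbf{c}^\ast$, because the entry $(m,n)$ depends only on $\langle n-m\rangle$ in both factors. A circulant matrix is diagonalized as $\mathbf{F}^{\rm H}\mathrm{Diag}\{\mathbf{t}\}\mathbf{F}$, with eigenvalue vector $\mathbf{t}$ equal (up to normalization) to the DFT of its first column. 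Since $\mathbf{c}$ is itself proportional to $\mathbf{F}^{\rm H}\mathbf{p}$, being the inverse DFT of the indicator, we get
\[
\mathbf{t}=\tfrac{1}{N}\mathbf{F}\bigl[(\mathbf{F}^{\rm H}\mathbf{p})\odot(\mathbf{F}^{\rm H}\mathbf{p})^\ast\bigr],
\]
which is the form in (\ref{T31})--(\ref{T33}). This is the convolution theorem: the spectrum of $|\mathbf{c}|^2$ is the cyclic autocorrelation of $\mathbf{p}$.

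The main obstacle is bookkeeping rather than ideas: matching the normalization of $\mathbf{F}_N$ (unitary or not, which fixes the factor $1/N$), and identifying exactly which rows survive. The latter requires seeing how $\mathbf{\Gamma}$ and the subscripted shifted DFT $\mathbf{F}_{N,N/2}$ act, and whether the column shift merely introduces a diagonal phase that cancels in $\mathbf{V}^{\rm H}\mathbf{V}$ up to a unimodular diagonal similarity. If it does not cancel exactly, I will note that a diagonal unitary similarity $\mathbf{\Lambda}^{\rm H}\mathbf{M}\mathbf{\Lambda}$ with $\mathbf{\Lambda}$ built from a linear phase keeps $\mathbf{M}$ circulant. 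Moreover, in $\mathbf{M}\odot\mathbf{M}^\ast$ any such phases cancel entirely, so $\mathbf{T}$ is unaffected. I will use this cancellation to dispose of all shift and phase conventions at once, treating the time case, with its reflection and cyclic shift, last and most carefully.
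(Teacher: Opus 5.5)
Your proposal is correct and follows the paper's own route. You write $\mathbf{V}^{\rm H}\mathbf{V}=\mathbf{F}^{\rm H}\mathrm{Diag}\{\mathbf{p}\}\mathbf{F}$, so it is circulant with first column $\mathbf{c}=\mathbf{F}^{\rm H}\mathbf{p}$; you note that its Hadamard product with its conjugate is circulant with first column $\mathbf{c}\odot\mathbf{c}^\ast$; and you diagonalize via the DFT of that column using $\mathbf{F}^{-1}=\frac{1}{N}\mathbf{F}^{\rm H}$. Your plan is in fact more careful than the appendix, which asserts the form of $(\mathbf{V}^{\rm s})^{\rm H}\mathbf{V}^{\rm s}$ without derivation and treats the frequency and time cases only as "similar." Your row-selection/monomial-matrix argument fills that gap, as does your observation that shifts and phases cancel in $\mathbf{M}\odot\mathbf{M}^\ast$, which makes $\mathbf{t}$ insensitive to cyclic shifts or reflections of $\mathbf{p}$.
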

					\emph{Proof}: See Appendix \ref{app2}.		
					
					According to Theorem 2, the $n$-mode product operations in (\ref{AQ}) can be efficiently accelerated using FFTs. Substituting ${{\bf{T}}^{\rm{s}}}$, ${{\bf{T}}^{\rm{f}}}$, and ${{\bf{T}}^{\rm{t}}}$ with the expressions given in (\ref{T21})-(\ref{T23}) reduces their computational complexity to ${\cal O}$$\left( {N_\upvartheta }{N_\uptau }{N_\upnu }{{\log }_2}{N_\upvartheta }\right.$$ +$ $ {N_\upvartheta }N_\uptau ^{}{N_\upnu }{{\log }_2}{N_\uptau } +$$\left. {N_\upvartheta }{N_\uptau }{N_\upnu }{{\log }_2}{N_\upnu } \right)$. This result captures the core online cost of the proposed Algorithm~\ref{SCFgen}. To further analyze its complexity and overhead, we provide a theoretical comparison with the standard online probing approach in \cite{mine}, which is also adopted as our simulation baseline in the next section. Both methods rely on iterative updates; let $d_{\max,\text{probing}}$ and $d_{\max,\text{PCF}}$ denote their respective numbers of iterations required for convergence. The overall online computational complexity of generating $N_{\upvartheta}\times N_{\uptau} \times N_{\upnu}$-dimensional TB domain sCSI for $U$ UTs is summarized in Table~\ref{Comparecom}.

					Although the derivation above uses the triple BBCM in (\ref{tbchannel1}), the same framework can be extended to other BBCMs. For example, for the 2D BBCM in \cite{10292876, 11247875}, a processing flow similar to Algorithm~\ref{SCFgen} can be applied to obtain SF beam domain sCSI. For more complex scenarios, such as near-field propagation, the spatial beam matrix ${\bf V}^{\rm{s}}$ can be replaced by a polar domain beam matrix under near-field conditions \cite{9693928}; regardless of such variations, the core idea and overall workflow of the algorithm remain unchanged. For a complete comparison, Table~\ref{Comparecom} also lists the computational complexity of both methods for generating SF beam domain sCSI. More importantly, the proposed PCF-based method introduces zero online pilot overhead, because all required PCFs are obtained during the offline phase.

					Building on the theoretical comparison in Table~\ref{Comparecom}, we proceed to a concrete numerical analysis in the next section. There, we evaluate the actual computational cost using the specific parameters from our simulation scenarios and the iteration counts observed during algorithm execution.

					\definecolor{lightblue}{rgb}{0.93,0.95,1.0}
					\begin{table*}[t]
						\captionsetup{font=footnotesize}
						\caption{Computational Complexity and Overhead Comparison for Acquiring sCSI}
						\label{Comparecom}
						\centering
						\scriptsize
						\renewcommand{\arraystretch}{1.5}
						\setlength{\tabcolsep}{2pt} 
						\begin{tabular}{>{\raggedright\arraybackslash}p{3cm}
								>{\raggedright\arraybackslash}p{2cm}
								>{\raggedright\arraybackslash}p{5.5cm}
								>{\raggedright\arraybackslash}p{5.5cm}
							}
							\toprule
							\textbf{Method Category} &
							\textbf{Pilot Overhead} &
							\textbf{Computational Complexity for SF Beam Domain sCSI} &
							\textbf{Computational Complexity for TB Domain sCSI}  \\
							\midrule
							\rowcolor{lightblue}
							Online Probing&
							Yes &
							${\cal O}(AK{N_\upvartheta }{N_\uptau }{d_{{\rm{max,probing}}}}U)$
							&
							$\mathcal{O}(AKN_p N_\upvartheta N_\uptau N_\upnu {d_{{\rm{max,probing}}}}U)$  \\
							
							Proposed PCF-Based &
							No &
							$\mathcal{O}((N_\upvartheta N_\uptau \log_2 N_\upvartheta + N_\upvartheta N_\uptau  \log_2 N_\uptau)d_{\max,\mathrm{PCF}}U)$ &
							$\mathcal{O}((N_\upvartheta N_\uptau N_\upnu \log_2 N_\upvartheta + N_\upvartheta N_\uptau N_\upnu \log_2 N_\uptau + N_\upvartheta N_\uptau N_\upnu \log_2 N_\upnu)d_{\max,\mathrm{PCF}}U)$ \\
							\bottomrule
						\end{tabular}
					\end{table*}

						\section{Simulation Results}\label{section5}
						In this section, we present simulation results to validate the effectiveness of the proposed methods. The simulations are conducted using the 6G pervasive channel model (6GPCM) generator \cite{9786750}, which provides AAD domain channel parameters at various locations. These parameters are subsequently converted into the corresponding CIR and CTF. However, the accurate beam domain sCSI is not directly available from the generator.  Prior works on sCSI acquisition \cite{9910031, mine,8074806,10146318,9592715} have demonstrated that long-term online probing is an accurate and practically effective mechanism for acquiring sCSI for multi-UT channel estimation and precoding. Although the implementation details differ across studies, the results in \cite{10146318} show that all the methods yield highly comparable sCSI estimates. Following this insight, and given that \cite{mine} specifically validates the use of probed sCSI for channel estimation, we adopt its online probing algorithm as a representative baseline. To benchmark against these methods, we generate sCSI using both the conventional online probing approach and the proposed PCF-based Algorithm~\ref{SCFgen}. The resulting sCSI is then incorporated into the channel estimation procedure, where it plays two roles: (i) serving as reference information for scheduling pilots across different UTs, and (ii) acting as prior statistical knowledge to approach the minimum mean square error (MMSE) performance. When the estimation performance obtained using PCF-generated sCSI matches that achieved with online probing-generated sCSI, it confirms that the proposed PCF-based approach constitutes a reliable alternative for acquiring the sCSI required for channel estimation. This, in turn, verifies both the accuracy and the practical utility of the proposed method.

						We first generate the PCFs at various locations using the method outlined in Section~\ref{section3}. To distinguish the UT involved in PCF acquisition from those used in subsequent channel estimation applications, we refer to the UT in the PCF acquisition phase as the test terminal (TT). Both the BS and the TT are equipped with multi-antenna arrays to enhance spatial resolution and enable accurate angle estimation \cite{10179246}. As shown in Table~\ref{tablepara}, the TT adopts a $32 \times 1$ ULA during this stage. By moving the TT across different grid regions, channel data are collected at a number of locations, and the corresponding PCFs are extracted. Each PCF, together with its associated location index, is stored to form the cell-specific channel charting with PCFs. Once the offline PCF acquisition is completed, we proceed to evaluate its effectiveness in an online multi-UT channel estimation scenario. A group of single-antenna mobile UTs are introduced within the coverage area. Algorithm~1 is then evaluated in a scenario where the BS simultaneously serves these UTs.

						We conduct simulations in two types of propagation scenarios, namely "${\rm{3GPP\_38}}.{\rm{901\_UMa\_NLOS}}$" and "${\rm{3GPP\_38}}.{\rm{901\_RMa\_NLOS}}$". The propagation environment is divided into $1$m $\times$ $1$m grids for analysis. The detailed parameter configurations are listed in Table \ref{tablepara}. We evaluate the performance of sCSI-assisted channel estimation under varying numbers of UT ($U = 48$ and $U = 180$) and different levels of UT mobility. The initial positions of the UTs are randomly selected within the coverage area, and each UT moves at a speed of ${v_{{\rm{move}}}}$, with a movement direction ${\alpha _{{\rm{move}}}}$ randomly chosen from the range $[0, 2\pi)$. 
						
						\newcolumntype{L}{>{\hspace*{-\tabcolsep}}l}
						\newcolumntype{R}{c<{\hspace*{-\tabcolsep}}}
						\definecolor{lightblue}{rgb}{0.93,0.95,1.0}
						\definecolor{lightgreen}{rgb}{0.95,1.0,0.93}
						\begin{table}[htbp]
							\captionsetup{font=footnotesize}
							\caption{Parameter Settings}\label{tablepara}
							\centering
							\ra{1.3}
							\scriptsize
							\begin{tabular}{LR}
								\toprule
								\textbf{Parameter} &  textbf{Value}\\
								\midrule
								\rowcolor{lightblue}
								Number of BS antennas, $A$&$128$\\
								Number of TT antennas&$32$\\
								\rowcolor{lightblue}
								Number of UT antennas &$1$\\
								Carrier frequency $f_{\rm c}$ &$5.8$ GHz\\
								\rowcolor{lightblue}
								Number of subcarriers, $N_{\rm c}$ &2048\\	 
								CP length $N_{\rm g}$ &144\\
								\rowcolor{lightblue}
								Number of valid subcarriers $K$ &360\\	 
								Subcarrier spacing  $\Delta f$&$15$ kHz\\  
								\rowcolor{lightblue}
								Number of slot in each frame, $N_{\rm p}$ &8\\	                 
								Number of symbols in each slot, $N_{\rm b}$&$14$\\
								\rowcolor{lightblue}
								Speed of UTs, $v_{\rm {move}}$ &3, 15, 25 km$/$h \\	                            
								Number of UTs, $U$&$48 \& 180$\\
								\bottomrule
							\end{tabular}
						\end{table}

						We incorporate the sCSI obtained via both online probing and Algorithm~\ref{SCFgen} into the multi-UT channel estimation process. For channel estimation, we adopt the method proposed in \cite{mine}, which has demonstrated superior performance in multi-UT massive MIMO-OFDM systems. This method requires the SF beam domain or TB domain sCSI as prior information. To evaluate the performance, we use the normalized mean square error (NMSE). During the analysis time, the location of the $u$-th UT is denoted as ${{{\bf{s}}_u}}$, and its SFT domain channel tensor is represented as ${{\mathbfcal H}_{{{\bf{s}}_u},\tilde{n}}^{{\rm{SFT}}}}$. The NMSE for SFT domain channel estimation in the current frame is calculated as follows:
						\begin{equation}
							{\varepsilon _{{\rm{NMSE}}}} = \frac{1}{{{S_{{\rm{sam}}}}U}}\sum\limits_{s = 1}^{{S_{{\rm{sam}}}}} {\sum\limits_{u = 1}^{U} {\frac{{\left\| {{\mathbfcal H}_{{{\bf{s}}_u},\tilde n}^{{\rm{SFT}}}\left( s \right) - \hat {\mathbfcal H}_{{{\bf{s}}_u},\tilde n}^{{\rm{SFT}}}\left( s \right)} \right\|_2^2}}{{\left\| {{\mathbfcal H}_{{{\bf{s}}_u},\tilde n}^{{\rm{SFT}}}\left( s \right)} \right\|_2^2}}} } , 
						\end{equation}
						where ${{S_{{\rm{sam}}}}}$ is the number of channel samples within a frame, ${{\mathbfcal H}_{{{\bf{s}}_u},\tilde n}^{{\rm{SFT}}}\left( s \right)}$ is the $s$-th channel tensor sample and ${\hat {\mathbfcal H}_{{{\bf{s}}_u},\tilde n}^{{\rm{SFT}}}\left( s \right)}$ is the corresponding estimate.

						\subsection{ PCF-based vs. Online Probing for sCSI-aided Channel Estimation
						}
						Fig.~\ref{U40} shows the NMSE performance of channel estimation assisted by different sCSI types in the UMa scenario with $U=48$ UTs under various UT speeds. The beam domain dimension is set as $N_{\upvartheta} = 2A$, $N_{\uptau} = 2K$, $N_{\upnu} = 4N_{\rm p}$, i.e., $N_{\upvartheta} = 256$, $N_{\uptau} =720$, $N_{\upnu} = 32$. The legends in Fig.~\ref{U40} include the following types of sCSI:
						\begin{itemize}
							\item[$\bullet$] \emph{SF Probing}: SF beam domain sCSI obtained through the online probing method;
							\item[$\bullet$] \emph{SF Algorithm 1}:SF beam domain sCSI obtained from PCF using Algorithm~\ref{SCFgen};
							\item[$\bullet$] \emph{TB Probing}: TB domain sCSI obtained through the online probing method;	
							\item[$\bullet$] \emph{TB Algorithm 1}: TB domain sCSI obtained from PCF using Algorithm~\ref{SCFgen}.
						\end{itemize}
						
						Simulation results demonstrate that the sCSI generated by Algorithm~\ref{SCFgen} achieves comparable NMSE performance to that obtained via uplink probing, regardless of whether the SF beam domain or the TB domain sCSI is used. Moreover, across varying levels of UT mobility, both sCSI acquisition methods consistently deliver similar channel estimation performance. Notably, the TB domain sCSI exhibits a stronger capability to capture Doppler domain channel characteristics compared to its SF domain counterpart. As shown in Fig.~\ref{U40}, the TB domain sCSI achieves superior NMSE performance, attributed to its more comprehensive representation of channel properties across multiple domains.

						Then, we compare the NMSE performance of different TB domain sCSI-assisted channel estimation methods under varying values of $N_{\upvartheta}$, $N_{\uptau}$, and $N_{\upnu}$. As shown in Fig.~\ref{NNN}, the channel estimation aided by the sCSI generated from Algorithm 1 consistently achieves NMSE performance comparable to that obtained via online probing, even as the number of beams in the TB domain changes.

						\begin{figure}[t]
							\centering
							\includegraphics[width =190pt]{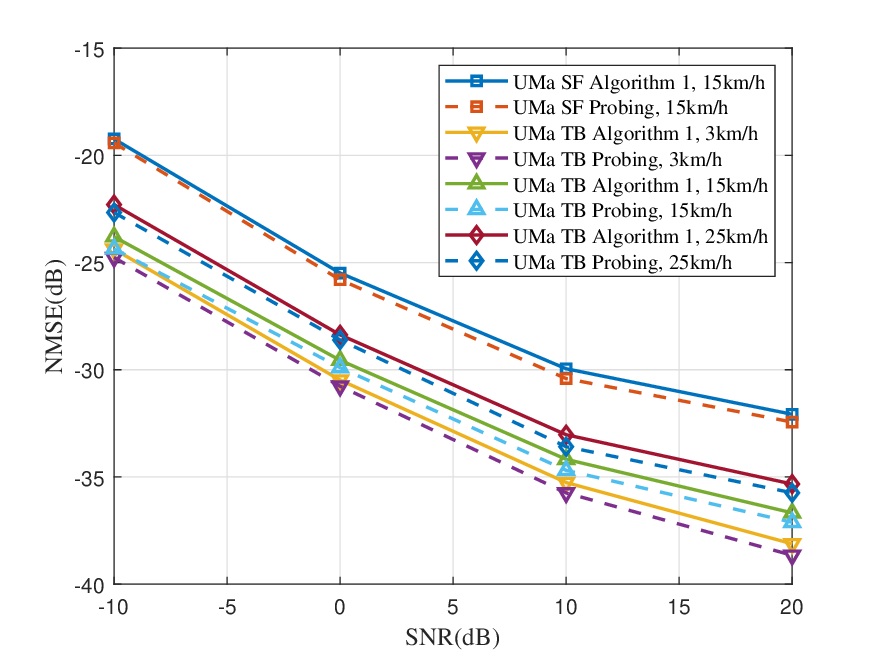}
							\caption{NMSE versus SNR for sCSI acquired by different methods under UMa scenario with varying UT speeds, $U = 48$.}
							\label{U40}
						\end{figure}

						\begin{figure}[t]
							\centering
							\includegraphics[width =190pt]{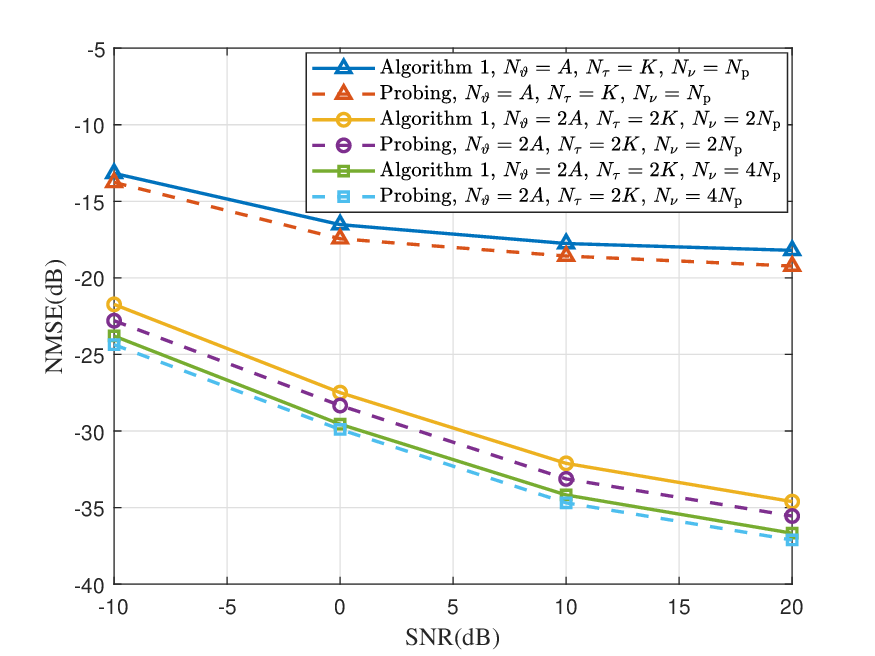}
							\caption{NMSE versus SNR for TB domain sCSI acquired by different methods under UMa scenario with varying $N_{\upvartheta}$, $N_{\uptau}$ and $N_{\upnu}$, $U = 48$, $v_{\rm{move}} = 15$ km$/$h.}
							\label{NNN}
						\end{figure}
						
						As the number of UTs increases, the growing inter-UT interference makes channel estimation more challenging, leading to greater dependence on accurate sCSI \cite{mine}. To investigate this effect, we switch the simulation scenario from Fig.~\ref{U40} to $U=180$ and ${v_{{\rm{move}}}} = 15$ km$/$h with the NMSE performance of various sCSI-assisted channel estimation methods shown in Fig.~\ref{U180}. Along the TB domain, we keep $N_{\upvartheta} = 2A$, $N_{\uptau} = 2K$, $N_{\upnu} = 4N_{\rm p}$. The results reveal that Algorithm~\ref{SCFgen} is still able to approximate the NMSE performance of the online probing method. Moreover, as the number of UTs increases, the NMSE performance gain of the TB domain sCSI-assisted channel estimation method becomes even more pronounced compared to the SF beam domain sCSI-assisted method, showing a greater improvement than in Fig.~\ref{U40}. This is because appropriate UT pilot scheduling along the Doppler domain, assisted by the TB domain sCSI, can further suppress inter-UT pilot interference and lead to a better estimation performance\cite{mine}.
						
						\begin{figure}[t]
							\centering
							\includegraphics[width =190pt]{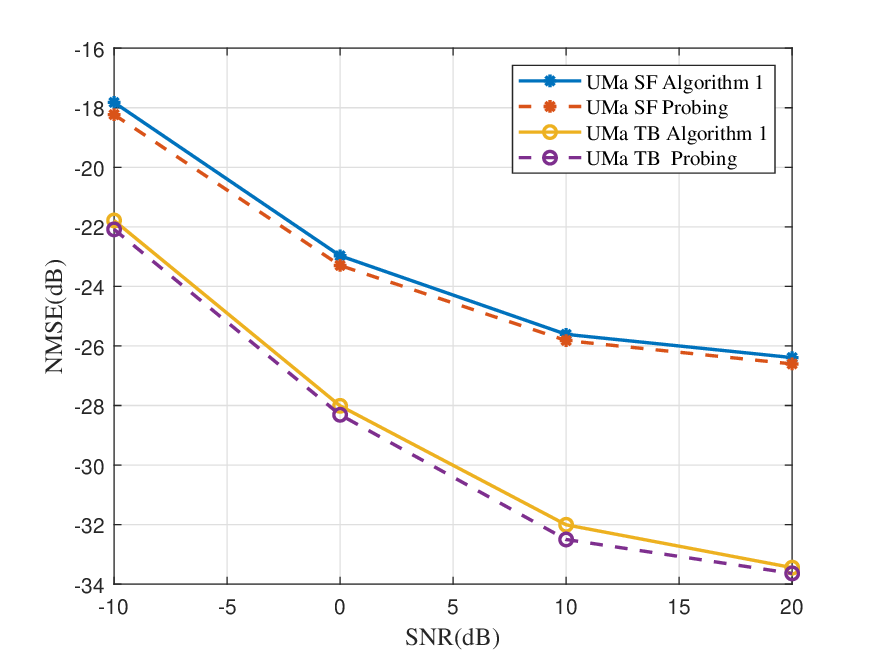}
							\caption{NMSE versus SNR for sCSI acquired by different methods under UMa scenario, $U=180$, $v_{\rm{move}} = 15$ km$/$h.}
							\label{U180}
						\end{figure}

						\begin{figure}[t]
							\centering
							\includegraphics[width =190pt]{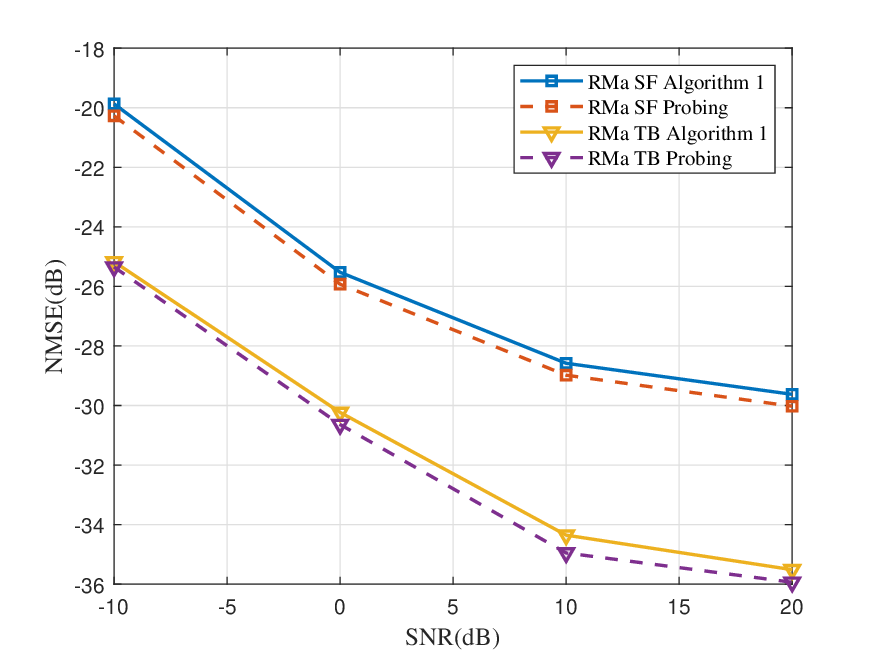}
							\caption{NMSE versus SNR for sCSI acquired by different methods under RMa scenario, $U=180$, $v_{\rm{move}} = 15$ km$/$h.}
							\label{RMA}
						\end{figure}

						Fig.~\ref{RMA} presents the NMSE performance of different sCSI-assisted channel estimation methods for $U=180$ UTs, ${v_{{\rm{move}}}} = 15$ km$/$h in RMa environments. It can be observed that in the RMa scenario, the sCSI generated by both methods still yields comparable NMSE performance in channel estimation. Due to the increased sparsity of the RMa channel, pilot interference among UTs can be more effectively suppressed through appropriate pilot scheduling based on sCSI. As a result, the NMSE performance in this scenario is consistently better than that observed in the corresponding UMa scenario.

						In our simulations, the iterative procedures converge with $d_{\max,\text{probing}}=240$ for online probing and $d_{\max,\text{PCF}}=50$ for the proposed method. Plugging these values and the system dimensions into the complexity orders in Table~\ref{Comparecom} confirms that the actual online computational load of our method is substantially lower than that of online probing, consistent with the theoretical analysis. This efficiency gain, combined with zero online pilot overhead, underscores a practical advantage.

					The storage efficiency of our PCF-based channel charting is another key advantage. While existing works \cite{10146318, 10292876} that directly store the SF beam domain sCSI as CFs require $256 \times 720 = 184,320$ elements per location, our PCF representation, with a maximum of 20 clusters, stores only 140 parameters per location (as defined in (\ref{SCF})). This represents a reduction in storage cost of over three orders of magnitude.
					
					The simulation results further demonstrate the robustness and flexibility inherent to the proposed PCF-based method. In practice, a single set of PCFs, once obtained, can be used to generate sCSI under varying formats and for different target beam domains (e.g., from SF beam to TB domain) without the need for re-probing or retraining. This is different from online probing methods, whose performance is coupled to the specific scenario for which pilots are transmitted, often necessitating a complete re-estimation process when conditions change. The ability of the PCF-based method to maintain consistent sCSI reconstruction quality across these variations is a key advantage validated by our experiments.
					
					\subsection{Extensions: PCF-Enabled sCSI-Assisted channel estimation and Comparison with UL-Based Channel Charting}
					Finally, to further illustrate the effectiveness of the proposed channel charting with PCFs in channel estimation, we include an additional comparison with representative UL-based channel charting methods mentioned in Section~\ref{section1}. Unlike channel charting with CFs, which can reconstruct explicit sCSI and directly support pilot design and MMSE-type estimation, UL-based channel charting cannot generate physical channel parameters or sCSI, and its role in channel estimation is limited to UT grouping and phase-shifted orthogonal pilot (PSOP) reuse \cite{9968109,che2024channel}. In the UMa scenario with $U = 48$ UTs moving at $v_{\rm{move}} = 15$ km/h, we consider the following UL-based CC variants:
					\begin{itemize}
						\item \emph{Angle UL-based CC}: UL-based CC constructed from angular domain CSI, used for UT grouping, PSOP reuse, and subsequent channel estimation;
						\item \emph{Angle-delay UL-based CC}: UL-based CC constructed from angular-delay domain CSI, used for UT grouping, PSOP reuse, and subsequent channel estimation.
					\end{itemize}
					These UL-based methods are implemented following the procedures in \cite{11223663,9968109,che2024channel}. As shown in Fig.~\ref{CCcompare}, while the angle-delay variant improves pilot reuse over the purely angular approach, both remain fundamentally constrained by the lack of sCSI reconstruction. In contrast, PCF-based charting generates accurate, physically meaningful sCSI, providing richer prior information and achieving significantly lower NMSE. These results highlight that, beyond offering an intuitive, physics-driven channel representation, channel charting with PCFs provides a more practical and flexible framework for sCSI-aided channel estimation than UL-based channel charting.


					\begin{figure}[t]
						\centering
						\includegraphics[width =190pt]{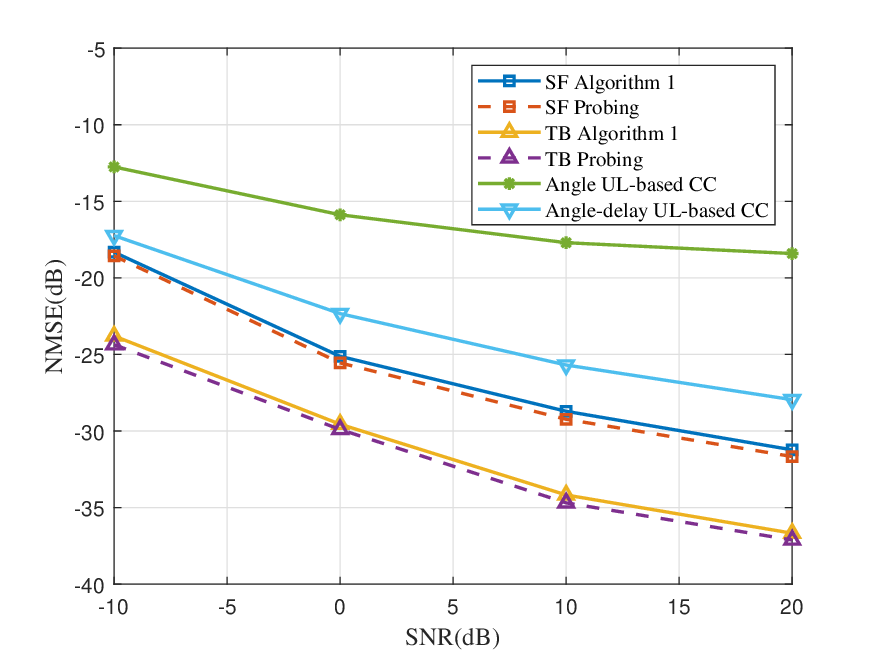}
						\caption{NMSE versus SNR by different methods under UMa scenario with $U = 48$, $v_{\rm{move}} = 15$ km$/$h. The beam domain sCSI dimensions are  $N_{\upvartheta} = 2A$, $N_{\uptau} = 2K$ and $N_{\upnu} = 4{N_{\rm p}}$.}
						\label{CCcompare}
					\end{figure}

					\section{Conclusions}\label{section6}
					In this paper, we proposed the channel charting with PCFs and demonstrated its effectiveness in sCSI acquisition. We firstly introduced the definition of PCF with a cluster-based GBSM, which compactly captures the physical characteristics of the channel through a set of parameters. We then proposed a methodology for PCF acquisition in massive MIMO-OFDM systems. By aggregating the acquired PCFs across different locations, we constructed a structured, location-indexed channel charting. Further, we proposed an efficient algorithm to acquire beam domain sCSI from the PCFs in the channel charting. Simulation results showed that the PCF-derived sCSI achieves accuracy comparable to conventional online probing methods while reducing overhead and ensuring channel estimation accuracy. Overall, the proposed channel charting with PCFs offers an environment-aware and scalable solution for  wireless communication systems. Future work will further investigate optimal charting granularity guided by physical channel models to balance the trade-off between acquisition performance and charting maintenance cost. Moreover, the proposed PCF can be naturally integrated into wireless digital twin approaches as virtual digital objects for flexible sCSI reconstruction. Integration with generative digital twin models can further facilitate finer-resolution and more accurate channel charting.

					\appendices
					\section{Proof of Theorem \ref{theorem1}}\label{app1}
					According to the definition of tensor n-mode product, elements in ${\mathbfcal A}\left( {\mathbfcal Q} \right)$ can be calculated as
					\begin{align}
						&\scalebox{0.9}{${[{\mathbfcal A}\left( {\mathbfcal Q} \right)]_{{n_\upvartheta },{n_\uptau },{n_\upnu }}} =$} \scalebox{0.98}{$\sum\limits_{{n_\upvartheta }' = 1}^{{N_\upvartheta }} {\sum\limits_{{n_\uptau }' = 1}^{{N_\uptau } } {\sum\limits_{{n_\upnu }' = 1}^{{N_\upnu }} {{{[{{\bf{T}}^{\rm{s}}}]}_{{n_\upvartheta },{n_\upvartheta }'}}{{[{{\bf{T}}^{\rm{f}}}]}_{{n_\uptau },{n_\uptau }'}}} } }$} \notag\\
						&\scalebox{0.9}{$\quad\quad\quad\quad\quad\quad\quad\quad{\left[ {{{\bf{T}}^{\rm{t}}}} \right]_{{n_\upnu },{n_\upnu }'}}{[({\mathbfcal Q} \odot {\mathbfcal Q})]_{{n_\upvartheta }',{n_\uptau }',{n_\upnu }'}}. $}
					\end{align}
					With the problem in (\ref{finalp}) and the definition of ${L\left( {\mathbfcal Q} \right)}$ in (\ref{LQ}), we have (\ref{der}) on the top of the next page. Then the following equation holds:
					\begin{align}\label{holds}
						&	\scalebox{1.1}{$\frac{{\partial {{[{\mathbfcal A}\left( {\mathbfcal Q} \right)]}_{{n_\upvartheta },{n_\uptau },{n_\upnu }}}}}{{\partial {{\left[ {\mathbfcal Q} \right]}_{{n_{\upvartheta 1}},{n_\uptau }_1,{n_{{\upnu _1}}}}}}} =$} \notag\\
						&	2{[{{\bf{T}}^{\rm{s}}}]_{{n_\upvartheta },{n_{\upvartheta 1}}}}{[{{\bf{T}}^{\rm{f}}}]_{{n_\uptau },{n_{\uptau _1}}}}{\left[ {{{\bf{T}}^{\rm{t}}}} \right]_{{n_\upnu },{n_{\upnu_ 1}}}}{\left[ {\mathbfcal Q} \right]_{{n_{\upvartheta _1}},{n_\uptau }_1,{n_{{\upnu _1}}}}}. 
					\end{align}
					
					By substituting (\ref{holds}) into (\ref{der}), (\ref{final2}) on the top of the next page can be derived. Since ${{{\bf{T}}^{\rm{s}}}}$, ${{{\bf{T}}^{\rm{f}}}}$  and ${{{\bf{T}}^{\rm{t}}}}$ are symmetric matrices, and by applying the computation rules of the $n$-mode product of tensors, (\ref{final2}) can be reformulated into the expression  in (\ref{result}).
					
					\begin{figure*}
						
						\begin{align}\label{der}
							\scalebox{1}{$\frac{{\partial L\left( {\mathbfcal Q} \right)}}{{\partial {{\left[ {\mathbfcal Q} \right]}_{{n_{\upvartheta 1}},{n_\uptau }_1,{n_{{\upnu _1}}}}}}} = \frac{{\sum\limits_{{n_\upvartheta } = 1}^{{N_\upvartheta } } {\sum\limits_{{n_\uptau } = 1}^{{N_\uptau } } {\sum\limits_{{n_\upnu } = 1}^{{N_\upnu } } {\partial {{[{\mathbfcal A}\left( {\mathbfcal Q} \right)]}_{{n_\upvartheta },{n_\uptau },{n_\upnu }}}} } } }}{{\partial {{\left[ {\mathbfcal Q} \right]}_{{n_{\upvartheta 1}},{n_\uptau }_1,{n_{{\upnu _1}}}}}}} -\!\!\!\! \sum\limits_{{n_\upvartheta } = 1}^{{N_\upvartheta } } {\sum\limits_{{n_\uptau } = 1}^{{N_\uptau } } {\sum\limits_{{n_\upnu } = 1}^{{N_\upnu } } \!\!\!{\frac{{{{[\bar {\mathbfcal A}]}_{{n_\upvartheta },{n_\uptau },{n_\upnu }}}}}{{{{[{\mathbfcal A}\left( {\mathbfcal Q} \right) ]}_{{n_\upvartheta },{n_\uptau },{n_\upnu }}}}}\frac{{\partial {{[{\mathbfcal A}\left( {\mathbfcal Q} \right)]}_{{n_\upvartheta },{n_\uptau },{n_\upnu }}}}}{{\partial {{\left[ {\mathbfcal Q} \right]}_{{n_{\upvartheta 1}},{n_\uptau }_1,{n_{{\upnu _1}}}}}}}} } }.$}
						\end{align}

						\begin{align} \label{final2}
							\scalebox{1}{$\frac{{\partial L\left( {\mathbfcal Q} \right)}}{{\partial {{\left[ {\mathbfcal Q} \right]}_{{n_{\upvartheta _1}},{n_\uptau }_1,{n_{{\upnu _1}}}}}}} = 2{\left[ {\mathbfcal Q} \right]_{{n_{\upvartheta _1}},{n_\uptau }_1,{n_{{\upnu _1}}}}}\sum\limits_{{n_\upvartheta } = 1}^{{N_\upvartheta }} {\sum\limits_{{n_\uptau } = 1}^{{N_\uptau } } {\sum\limits_{{n_\upnu } = 1}^{{N_\upnu } } {{{[{{\bf{T}}^{\rm{s}}}]}_{{n_\upvartheta },{n_{\upvartheta 1}}}}{{[{{\bf{T}}^{\rm{f}}}]}_{{n_\uptau },{n_{\uptau_1}}}}{{\left[ {{{\bf{T}}^{\rm{t}}}} \right]}_{{n_\upnu },{n_{\upnu _1}}}}(1 - \frac{{{{[\bar {\mathbfcal A}]}_{{n_\upvartheta },{n_\uptau },{n_\upnu }}}}}{{{{[{\mathbfcal A}\left( {\mathbfcal Q} \right)]}_{{n_\upvartheta },{n_\uptau },{n_\upnu }}}}})} } }. $}
						\end{align}
						\hrulefill
					\end{figure*}

					\section{Proof of Theorem \ref{theorem2}}\label{app2}
					We take ${{\bf{T}}^{\rm{s}}}$ as an example to prove the validity of (\ref{T21}); the proofs of (\ref{T22}) and (\ref{T23}) can be derived in a similar manner. 	According to \cite{10146318}, for any circulant matrix ${\bf{C}} \in {{\mathbb{C}}^{N \times N}}$, it can be decomposed as ${\bf{C}} =$\scalebox{0.9}{$ {\bf{F}}_N^{ - 1}{\rm{Diag}}\left( {{{\bf{F}}_N}{{\left[ {\bf{C}} \right]}_{:,1}}} \right){{\bf{F}}_N}$}. Since the matrix ${\left( {{{\bf{V}}^{\rm{s}}}} \right)^{\rm{H}}}{{\bf{V}}^{\rm{s}}}$ is a circulant matrix, it can be written as 
					\begin{align}
						\!\!\!\!\!	{\left( {{{\bf{V}}^{\rm{s}}}} \right)^{\rm{H}}}{{\bf{V}}^{\rm{s}}}\! = {\bf{F}}_{{N_\upvartheta }}^{\rm{H}}{\rm{Diag}}({{\bf{p}}^{\rm{s}}}){{\bf{F}}_{{N_\upvartheta }}}\!=\! {{{N_\upvartheta }}}{\bf{F}}_{{N_\upvartheta }}^{{\rm{ - 1}}}{\rm{Diag}}({{\bf{p}}^{\rm{s}}}){{\bf{F}}_{{N_\upvartheta }}},
					\end{align}
					where \scalebox{0.9}{${{\bf{F}}_{{N_\upvartheta }}}{\left[ {{{\left( {{{\bf{V}}^{\rm{s}}}} \right)}^{\rm{H}}}{{\bf{V}}^{\rm{s}}}} \right]_{:,1}}$}$ = {N_\upvartheta }{{\bf{p}}^{\rm{s}}}$ holds.

					After the element-wise multiplication, it can be observed that ${\mathbf{T}}^{\rm{s}}$ is also a circulant matrix. With the above property of circulant matrix, we have 
					\begin{equation}\label{Ts1}
						{{\bf{T}}^{\rm{s}}} = {\bf{F}}_{{N_\upvartheta }}^{{\rm{ - 1}}}{\rm{Diag}}\left\{ {{{\bf{F}}_{{N_\upvartheta }}}{{\left[ {{{\bf{T}}^{\rm{s}}}} \right]}_{:,1}}} \right\}{{\bf{F}}_{{N_\upvartheta }}}.
					\end{equation}
					With ${\left[ {{{\bf{T}}^{\rm{s}}}} \right]_{:,1}} = $ \scalebox{0.95}{${\left[ {{{\left( {{{\bf{V}}^{\rm{s}}}} \right)}^{\rm{H}}}{{\bf{V}}^{\rm{s}}}} \right]_{:,1}} \odot \left[ {{{\left( {{{\bf{V}}^{\rm{s}}}} \right)}^{\rm{H}}}{{\bf{V}}^{\rm{s}}}} \right]_{:,1}^*$}, (\ref{Ts1}) can be further expressed as
					\begin{align}
						{{\bf{T}}^{\rm{s}}} &= {\bf{F}}_{{N_\upvartheta }}^{{\rm{ - 1}}}{\rm{Diag}}\left\{ {{{\bf{F}}_{{N_\upvartheta }}}\left( {{\bf{F}}_{{N_\upvartheta }}^{\rm{H}}{{\bf{p}}^{\rm{s}}} \odot {{\left( {{\bf{F}}_{{N_\upvartheta }}^{\rm{H}}{{\bf{p}}^{\rm{s}}}} \right)}^*}} \right)} \right\}{{\bf{F}}_{{N_\upvartheta }}}\notag\\
						&=\frac{1}{{{N_\upvartheta }}}{\bf{F}}_{{N_\upvartheta }}^{\rm{H}}{\rm{Diag}}\left\{ {{{\bf{F}}_{{N_\upvartheta }}}\left( {{\bf{F}}_{{N_\upvartheta }}^{\rm{H}}{{\bf{p}}^{\rm{s}}} \odot {{\left( {{\bf{F}}_{{N_\upvartheta }}^{\rm{H}}{{\bf{p}}^{\rm{s}}}} \right)}^*}} \right)} \right\}{{\bf{F}}_{{N_\upvartheta }}}\notag\\
						&= {\bf{F}}_{{N_\upvartheta }}^{\rm{H}}{\rm{Diag}}\{ {{\bf{t}}^{\rm{s}}}\} {{\bf{F}}_{{N_\upvartheta }}}.
					\end{align}
					This completes the proof of (\ref{T21}).

					\section*{Acknowledgment}
					The authors would like to thank the editor and the
					anonymous reviewers for their constructive comments and
					suggestions.

					\bibliographystyle{IEEEtran}
					\bibliography{Refabrv_20180823,FINAL_VERSION}

				\end{document}